\definecolor{ForestGreen}{rgb}{0.1333,0.5451,0.1333}
\definecolor{DarkRed}{rgb}{0.8,0,0}
\definecolor{Red}{rgb}{1,0,0}
\newtheorem{theorem}{Theorem}[section]
\newtheorem{lemma}[theorem]{Lemma}
\newtheorem{remark}[theorem]{Remark}
\newtheorem*{theorem*}{Theorem}
\newtheorem*{corollary*}{Corollary}
\newtheorem*{conjecture*}{Conjecture}
\newtheorem*{lemma*}{Lemma}
\newtheorem*{thm*}{Theorem}
\newtheorem*{prop*}{Proposition}
\newtheorem*{obs*}{Observation}
\newtheorem*{definition*}{Definition}
\newtheorem*{remark*}{Remark}
\newtheorem*{rec*}{Recommendation}
\newenvironment{fminipage}%
  {\begin{Sbox}\begin{minipage}}%
  {\end{minipage}\end{Sbox}\fbox{\TheSbox}}
\newcommand\dd{\boldsymbol{\mathit{d}}}
\newcommand\ff{\boldsymbol{\mathit{f}}}
\renewcommand\gg{\boldsymbol{\mathit{g}}}
\newcommand\vv{\boldsymbol{\mathit{v}}}
\newcommand\vecone{\boldsymbol{1}}
\renewcommand\AA{\boldsymbol{\mathit{A}}}
\newcommand\BB{\boldsymbol{\mathit{B}}}
\newcommand\CC{\boldsymbol{\mathit{C}}}
\newcommand{\E}[1]{\mathop{{}\mathbb{E}}\left[#1\right]}
\DeclareMathOperator*{\argmin}{arg\,min}
\newcommand{\Center}{\textsc{Center}}
\newcommand{\diam}{diam}
\renewcommand{\P}{\mathbb{P}}
\renewcommand{\E}{\mathbb{E}}
\DeclareMathOperator{\dist}{dist}
\newcommand*\samethanks[1][\value{footnote}]{\footnotemark[#1]}
\title{Random-Shift Revisited: Tight Approximations for Tree Embeddings and $\ell_1$-Oblivious Routings}
\author{
 Rasmus Kyng
 \thanks{The research leading to these results has received funding from grant no. 200021 204787 of the Swiss National Science Foundation.} \thanks{The research leading to these results has received funding from the starting grant ``A New Paradigm for Flow and Cut Algorithms'' (no. TMSGI2 218022) of the Swiss National Science Foundation.}
 \\
 ETH Zurich \\
kyng@inf.ethz.ch
\and
Maximilian Probst Gutenberg\samethanks[1]\\
 ETH Zurich \\
maximilian.probst@inf.ethz.ch
\and
Tim Rieder\\
 ETH Zurich \\
tim.rieder@inf.ethz.ch}
\date{}
\begin{document}
\maketitle
\begin{abstract}

We present a new and surprisingly simple analysis of random-shift decompositions—originally proposed by Miller, Peng, and Xu [SPAA’13]: We show that decompositions for exponentially growing scales \(D = 2^0, 2^1, \ldots, 2^{\log_2(\operatorname{diam}(G))}\), have a tight \emph{constant}
trade-off between distance-to-center and separation probability \emph{on average} across the distance scales -- opposed to a necessary $\Omega(\log n)$ trade-off for a single scale.

This almost immediately yields a way to compute a tree $T$ for graph $G$ that preserves all graph distances with expected \(O(\log n)\)-stretch. This gives an alternative proof that obtains tight approximation bounds of the seminal result by Fakcharoenphol, Rao, and Talwar [STOC’03] matching the $\Omega(\log n)$ lower bound by Bartal [FOCS'96]. Our insights can also be used to refine the analysis of a simple $\ell_1$-oblivious routing proposed in [FOCS'22], yielding a tight $O(\log n)$ competitive ratio. 

Our algorithms for constructing tree embeddings and $\ell_1$-oblivious routings can be implemented in the sequential, parallel, and distributed settings with optimal work, depth, and rounds, up to polylogarithmic factors. Previously, fast algorithms with tight guarantees were not known for tree embeddings in parallel and distributed settings, and for $\ell_1$-oblivious routings, not even a fast sequential algorithm was known.
\end{abstract}

\pagenumbering{gobble}

\pagebreak

\pagenumbering{arabic}

\section{Introduction}

Distances in graphs lie at the heart of numerous applications in computer science and related fields. Whether modeling communication in social networks~\cite{distance-queries-social-networks-delling2014,distance-queries-social-networks-basu2024}, finding routings of almost optimal congestion in networks~\cite{distance-queries-congestion-minimization-racke2008}, or analyzing biological networks~\cite{distance-queries-biological-networks-pavlopoulos2011}, efficient distance queries are indispensable. More recently, there has been a rapid increase in interest in optimal transport for machine learning purposes (domain adaptation and generative models, text embeddings)~\cite{optimal-transport-machine-learning-lee2018, optimal-transport-generative-machine-learning-arjovsky2017}, computer graphics~\cite{optimal-transport-graphics-lavenant2018}, and analyzing biological data~\cite{optimal-transport-biology-gene-expression-schiebinger2019}, which can be solved in linear time on trees~\cite{optimal-transport-fast-trees-chen2024}.

In many practical scenarios, approximate distances are often sufficient, and rapid query responses are prioritized over exact computations. A particularly elegant solution to increasing speed and drastically reducing storage is the use of \emph{probabilistic tree embeddings}. Given input graph $G = (V,E,w : E \mapsto [1, W])$ with $n := |V|$ and $m := |E|$ and $W$ polynomially-bounded in $n$. A tree $T$ drawn from some distribution $\mathcal{T}$ is an $\alpha$-approximate probabilistic tree embedding of $G$ if $T$ preserves all distances in expectation by at most an $\alpha$-factor, i.e. for any $u,v \in V$, $\mathbb{E}_{T \sim \mathcal{T}}[\dist_T(u,v)] \leq \alpha \cdot\dist_G(u,v)$\footnote{We note that a deterministic notion of probabilistic tree embeddings has been considered where the goal is to find a tree that minimizes the average stretch among all edges}
and $\dist_T(u,v) \geq \dist_G(u,v)$. Probabilistic tree embeddings yield a particularly simple representation of the distances of the underlying graph or metric. Further, many (NP-)hard problems on graphs connected to cuts, network flow, metric labeling, buy-at-bulk network design, and vertex cover become tractable on trees~\cite{metric-labeling-chekuri2001, metric-labeling-kleinberg2002, buy-at-bulk-awerbuch1997}, thus yielding $O(\alpha)$-approximate algorithms to many important graph problems.

\paragraph{Probabilistic tree embeddings.} The notion of probabilistic tree embeddings was first studied in \cite{buy-at-bulk-awerbuch1997}. The problem of finding good tree embeddings has since received considerable attention. In \cite{buy-at-bulk-awerbuch1997}, a first algorithm was given achieving an $O(e^{\sqrt{\log n \log\log n}})$ approximation. Since, a long line of work (see \cite{lower-bound-embeddings-rabinovich1998, previous-embeddings-alon1995, previous-embeddings-bartal1996, lognloglogn-embedding-bartal1998, frttrees2003, bartal2004graph} and the references therein) first yielded an algorithm that achieves polylogarithmic approximation \cite{previous-embeddings-bartal1996} and finally culminated in the seminal result by Fakcharoenphol, Rao, and Talwar~\cite{frttrees2003} giving a tree embedding with tight approximation $\Theta(\log n)$. In \cite{bartal2004graph}, an $O(\log n)$-approximate tree embedding is given via a proof that condenses the structural insights from \cite{frttrees2003}. Tree embeddings obtained via their algorithm or some variation have since been referred to as FRT trees. To date, FRT trees are widely popular as the algorithm is reasonably simple and the analysis is insightful. They have also been taught in various undergraduate courses on advanced algorithms.

\paragraph{$\ell_1$-oblivious routings.} An $\ell_1$-oblivious routing is a matrix $\AA \in \mathbb{R}^{E \times V}$ that given any demand $\dd \perp \vecone$, i.e. $\dd^\top \vecone = 0$, over the vertices yields a flow $\mathbf{f} = \AA \mathbf{d}$ that routes the demand. The competitive ratio of an oblivious routing $\AA$ is the worst ratio achieved over all demands $\mathbf{d}$ between the $\ell_1$-cost\footnote{This is also often referred to as the cost of a \emph{transshipment flow}.} of the flow $\mathbf{f}$ and the optimal $\ell_1$-cost $\textsc{OPT}(\mathbf{d})$ of any flow routing demand $\mathbf{d}$. Today, $\ell_1$-oblivious routings are also an essential building block in distance-based algorithms that employ convex optimization, as the occurring subproblem can be solved by few left- and right-multiplies with $\AA$ (see \cite{sherman2017generalized, li2020faster, zuzic2022universally, rozhovn2022deterministic, zuzic2023simple, fox2024simple}). 

Note that for any tree $T$, we can construct a matrix $\AA_T$ such that $\mathbf{d}$ is routed optimally on $T$. Thus, for fixed $\dd$, an $\alpha$-approximate probabilistic tree embedding $T$ yields that $\AA_T \dd$ has expected $\ell_1$-cost $\alpha$. It follows that for FRT trees $T$ being sampled from distribution $\mathcal{T}$ with probability $p_T$, we have that $\AA = \sum_{T \in \mathcal{T}} p_T \cdot \AA_T$ is $O(\log n)$ competitive for every demand, and thus, an $\ell_1$-oblivious routing of quality $O(\log n)$. This is tight by the lower bounds on probabilistic tree embeddings. Alternatively, a construction by Englert and Räcke \cite{englert2009oblivious} achieves a tight competitive ratio. 

\paragraph{Low-diameter decompositions and probabilistic tree embeddings.} Given a graph $G$ and a diameter $D$, a low-diameter decomposition is a partition $\mathcal{X}$ of the vertex set into clusters $X \in \mathcal{X}$ that separate any two vertices $u,v \in V$ with probability at most $O(\dist_G(u,v) / D)$ while every cluster has weak diameter $\diam_G(X) = \max_{x,y\in X} \dist_G(x,y)$ at most $O(D \log n)$. The trade-off between separation probability and diameter is known to be tight \cite{previous-embeddings-bartal1996}.

Probabilistic tree embeddings are intimately connected to low-diameter decompositions: a hierarchy of low-diameter decompositions $\mathcal{A}_0 = \{\{v\} \;|\; v \in V\}, \mathcal{A}_1,  \ldots, \mathcal{A}_L = \{V\}$ for $L = \lceil \log_2(nW) \rceil$ where each $\mathcal{A}_l$ is a low-diameter decomposition of $G$ for parameter $2^l$ can be used almost directly to induce a probabilistic tree. That is, letting $\mathcal{A}_{\geq l}$ be the coarsest common refinement of partitions $\mathcal{A}_l, \mathcal{A}_{l+1}, \ldots, \mathcal{A}_L$, we can construct a tree $T$ that has a node associated with each cluster $C$ in a partition $\mathcal{A}_{\geq l}$ and for $l < L$ and edge to the node associated with cluster $C \subseteq C' \in \mathcal{A}_{\geq l+1}$ of weight $diam_G(C')$. The approximation factor achieved is $O(\log^2 n)$ with one log-factor stemming from the (tight) trade-off between separation probability and the diameter bound on each level, and the second log-factor from the number of levels.  

In \cite{frttrees2003}, a hierarchical low-diameter decomposition is developed such that for every vertex $u \in V$, the trade-off between separation probabilities of $u$ and diameter of clusters containing $u$ costs a log-factor across \textbf{all} levels. Formally, letting $\mathcal{A}_l(u)$ denote the cluster of $u$ in $\mathcal{A}_l$, this property can be succinctly summarized as $\sum_{0 \leq l \leq L} \frac{\diam_G(\mathcal{A}_l(u))}{2^l} = O(\log n)$ (which implies  $\sum_{0 \leq l \leq L} \frac{\diam_G(\mathcal{A}_{\geq l}(u))}{2^l} = O(\log n)$).

\paragraph{Random-shift decompositions.} In this article, we consider a different type of low-diameter decompositions than \cite{frttrees2003}: random-shift decompositions -- first suggested by Miller, Peng and Xu \cite{randomshift2013}. Random-shift Decompositions are popular due to their simplicity and amenability to various computational settings. These decompositions are computed by the algorithm given in \Cref{alg:randomShift}. The algorithm first samples a negative delay for each vertex $u \in V$ independently from the exponential distribution with mean $D$. It then computes a clustering around vertices with large negative delay by running a single-source shortest path (SSSP) computation.

\begin{algorithm}[H]
\caption{\textsc{RandomShiftDecomposition}($G = (V, E, w : E \mapsto [1, W], D)$)}
\label{alg:randomShift}
\For{each vertex $u \in V$}{
     Pick $\delta^u$ independently from the exponential distribution with mean $D$.
    }
    Let $G^s$ be the graph obtained from adding dummy source $s$ to $G$ and an edge $(s,v)$ of weight $\max_{w \in V} \delta^w - \delta^v$.\\
    Call an exact SSSP procedure on $G^s$ from $s$ and let $T^s$ be the shortest path tree obtained rooted at $s$. \\
    \For{each vertex $u \in V$ that is a child of $s$ in $T^s$}{
         $C^u$ is the set of all vertices in the subtree of $T^s$ rooted at $u$.\\
         \lForEach{$v \in C^u$}{$\textsc{Center}(v) \gets u$.}
    }
    \Return $(\mathcal{C} = \bigcup_{u \in V} \{C^u \}\setminus \{\emptyset\}, \textsc{Center})$ 
\end{algorithm}

A simple but clever analysis (see \cite{randomshift2013}) yields that the separation probability for any $u,v$ can be bounded by $O(\dist_G(u,v)/D)$ as desired, and it is easy to glean from the exponential distribution that no negative delay exceeds $O(D \log n)$ w.h.p. which yields the desired diameter bound of $O(D \log n)$. 

Let $\mathcal{C}_0 = \{\{v\} \;|\; v \in V\}$ and each singleton cluster has the vertex as center, the sets $\mathcal{C}_l$ for $0 < l < L$ being obtained by calling  \Cref{alg:randomShift} with parameter $2^l$ and $\mathcal{C}_L = \{V\}$ where the center of $V$ is an arbitrarily chosen vertex in $V$. Then, we call $\mathcal{C}_0, \mathcal{C}_1, \ldots, \mathcal{C}_L$ a \emph{hierarchical random-shift decomposition}. Naturally, we can derive a tree embedding $T$ as described above from these decompositions with approximation $O(\log^2 n)$.
Later, \cite{blurryballgrowing-becker2019} showed how to compute such hierarchical random shift decompositions in parallel using approximate SSSP computations.

Because hierarchical random-shift decompositions are 
simple, efficient, and parallelization-friendly, it is natural to ask if they can yield even stronger results.
However, a priori, this might seem unlikely, as
it is easy to construct examples where for fixed $u \in V$, the diameter of every cluster $\mathcal{C}_l(u)$ for $0 < l < L$ is $\Omega(2^l \log n)$ and thus one can seemingly not hope for a tight tree embedding induced by these decompositions.

\paragraph{A slightly tighter tree embedding construction.} On further inspection, one can slightly tighten the tree construction used above by choosing centers: Let $\textsc{Center}_0, \textsc{Center}_1, \ldots,  \textsc{Center}_L$\footnote{Throughout, all center functions are consistent, meaning all vertices in the same cluster have the same center.} be the associated center functions for the hierarchical random-shift decompositions $\mathcal{C}_0, \mathcal{C}_1, \ldots, \mathcal{C}_L$ as above where $\textsc{Center}_0$ maps each vertex to itself, and $\textsc{Center}_L$ maps all vertices to an arbitrary fixed vertex $r \in V$. 

When forming the tree induced by the (refinements of) $\mathcal{C}_0, \mathcal{C}_1, \ldots, \mathcal{C}_L$, one can now set the weight of edges $(C, C')$ where $C \in \mathcal{C}_{\geq l}, C \subseteq C' \in \mathcal{C}_{\geq l+1}$ equal to $\dist_G(\textsc{Center}_l(v), \textsc{Center}_{l+1}(v))$ for some $v \in C$ instead of using the rather loose upper bound of $\diam_G(C')$. Further, by consistency of the center function and the triangle inequality, we then have for every $w \in C$, that 
\begin{align*}
\dist_G(\textsc{Center}_l(v), \textsc{Center}_{l+1}(v)) &= \dist_G(\textsc{Center}_l(w), \textsc{Center}_{l+1}(w)) \\&\leq \dist_G(\textsc{Center}_l(w), w) + \dist_G(w, \textsc{Center}_{l+1}(v)).
\end{align*}
This yields an $O(\log n)$-approximate tree embedding when the trade-off between separation probabilities and distances to cluster centers is constant on average, and thus $O(\log n)$ over all distance scales.

\subsection{Our Contribution}

In this article, we show that for hierarchical random-shift decompositions, the trade-off between separation probabilities and distances to centers is indeed bounded by $O(\log n)$.\footnote{It was since pointed out that a similar result was derived in \cite{czumaj2021exploiting}, Theorem 2. This result, in turn, tightens the analysis from \cite{haeupler2016faster} by an $O(\log \log n)$ factor.} We summarize this result in the theorem below.

\begin{theorem}\label{thm:randomShiftExactMain}
Given graph $G = (V,E,w : E \mapsto [1, W])$, let $\mathcal{C}_0, \mathcal{C}_1, \ldots, \mathcal{C}_L$ be a hierarchical random-shift decomposition as described above. Then, for every $0 \leq l \leq L$ and $v \in V$, let  $\mathcal{C}_l(v)$ be the cluster containing $v$ in $\mathcal{C}_l$. Then, we have 
\begin{enumerate}
    \item for any $v \in V$, \[\mathbb{E}\left[\sum_{0 \leq l \leq L} \frac{\dist_G(v, \textsc{Center}_l(v))}{2^l}\right] \leq O(\log n). \]
    \item for any $0 \leq l \leq L$, $u,v \in V$, $\mathbb{P}[ \mathcal{C}_l(u) \neq \mathcal{C}_l(v)] = O\left(\frac{\dist_G(u,v)}{2^l}\right)$.
\end{enumerate}
\end{theorem}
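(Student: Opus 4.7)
My plan is to handle parts (2) and (1) separately. Part (2) is essentially the classical Miller--Peng--Xu bound, while part (1) --- the main novelty --- I would prove by coupling the delays across scales and bounding the total weighted distance pathwise by (twice) the maximum of $n$ iid exponentials.

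For part (2), I would reproduce (or cite) the standard argument from \cite{randomshift2013}. Fix $u, v \in V$ and scale $l$. The center of any vertex $z$ is $\textsc{Center}_l(z) = \argmin_w(\dist_G(w,z) - \delta^w)$. For every $w$, the triangle inequality gives $|\dist_G(w,u) - \dist_G(w,v)| \le \dist_G(u,v)$, so the induced arrival-time orderings at $u$ and at $v$ agree unless the first two arrivals at $u$ lie within $2\,\dist_G(u,v)$ of each other. By the memoryless property of the exponential, the gap between the first and second arrival at $u$ stochastically dominates an exponential of rate $1/2^l$, so this event occurs with probability $O(\dist_G(u,v)/2^l)$.

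For part (1), I would \emph{couple} the delays across scales: let $\xi^u \sim \mathrm{Exp}(1)$ be iid (independent of $l$) and set $\delta^u := 2^l \xi^u$ at scale $l$. Since the marginal of $\delta^u$ at each scale is unchanged, linearity of expectation gives that $\mathbb{E}\bigl[\sum_l \dist_G(v, \textsc{Center}_l(v))/2^l\bigr]$ is the same under this coupling as in the original algorithm. Writing $d_u := \dist_G(v,u)$ and $x := 1/2^l$, the center under the coupling is $\textsc{Center}_l(v) = \argmax_u(\xi^u - d_u \cdot x)$, which for varying $x \in (0,\infty)$ traces out the upper envelope of the affine functions $f_u(x) = \xi^u - d_u \cdot x$. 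As $x$ decreases from $\infty$ to $0$, the argmax visits a sequence of envelope vertices $w_0 = v, w_1, \ldots, w_m$ where $d_{w_j}$ and $\xi^{w_j}$ are both strictly increasing in $j$, with $w_m = \argmax_u \xi^u$, and transition points $x_j^* := (\xi^{w_j} - \xi^{w_{j-1}})/(d_{w_j} - d_{w_{j-1}})$ satisfying $x_m^* < \cdots < x_1^*$.

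The key step is a pathwise inequality. Write $d_{u^*(x)}$ for the distance from $v$ to the current winner at $x$. This function is non-increasing in $x$, so on each dyadic interval $x \in [2^{-l-1}, 2^{-l}]$, $d_{u^*(x)} \ge d_{u^*(2^{-l})}$; integrating and summing over $l$ yields
\[
\sum_l \frac{d_{u^*(2^{-l})}}{2^l}
\;\le\; 2 \int_0^\infty d_{u^*(x)}\, dx
\;=\; 2 \sum_{j=0}^m d_{w_j}\bigl(x_j^* - x_{j+1}^*\bigr),
\]
where $x_{m+1}^* := 0$ and $x_0^* := \infty$ (harmless since $d_{w_0}=0$). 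An Abel summation combined with the defining identity $x_j^*(d_{w_j} - d_{w_{j-1}}) = \xi^{w_j} - \xi^{w_{j-1}}$ telescopes the right-hand side to $2(\xi^{w_m} - \xi^{w_0}) \le 2\,\max_u \xi^u$. Taking expectations and using $\mathbb{E}[\max_u \xi^u] = H_n = O(\log n)$ for iid $\mathrm{Exp}(1)$ variables finishes part (1). The main obstacle is identifying the right coupling (which preserves per-scale expectations but exposes the upper-envelope structure) together with the factor-of-two discrete-to-continuous reduction; a direct per-scale bound on $\mathbb{E}[\dist_G(v,\textsc{Center}_l(v))]$ only yields $O(2^l \log n)$ and thus loses a further $\log n$ factor when summed over the $L$ scales.
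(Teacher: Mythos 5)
Your part (2) is the same memoryless/first-vs-second-arrival argument the paper uses (its Lemma 3.5, quoted from Miller--Peng--Xu), so nothing to compare there. Your part (1), however, is a genuinely different and, as far as I can check, correct proof. The paper works level by level: it sets up a ball-growing process with radii $r_l = r_{l-1}/2 + \Delta_l + 2$, shows $\E[\dist(v,\Center_l(v))] = O(r_l 2^l)$ by an explicit integration over the exponential density on annuli $I_i = B_{i+2}\setminus B_{i+1}$ (using $|I_i| < e^{(i+2)/2}|B_0|$), and separately shows $\sum_l r_l = O(\log n)$ because the total exponential growth is capped by $n$. You instead couple the scales via $\delta^u = 2^l\xi^u$ with $\xi^u \sim \mathrm{Exp}(1)$ (legitimate, since Property 1 is an expectation of a sum and only per-level marginals matter), identify $\Center_l(v)$ with the maximizer of the affine family $\xi^u - d_u x$ at $x=2^{-l}$ (which is indeed the argmin of $d_u - \delta^u$ selected by the SSSP from $s$), and bound the dyadic sum pathwise by $2\int_0^\infty d_{u^*(x)}\,dx$, which telescopes along the upper envelope to $2(\xi^{w_m}-\xi^{w_0}) \le 2\max_u \xi^u$, of expectation $H_n = O(\log n)$; the monotonicity of $d_{u^*(x)}$ and the crossing-point identity are both correct. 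Your route is shorter and gives an almost-sure bound $O(\max_u\xi^u)$ under the coupling; its limitation is that it only transfers to the actual algorithm (independent delays per level) in expectation, whereas the paper's per-level analysis yields the stronger statement that $\dist(v,\Center_l(v))/2^l$ is stochastically dominated by $O(r_l + X_l)$ with $X_l \sim \mathrm{Exp}(1)$ independent across levels -- a fact the paper reuses for the approximate-SSSP variant (where an additive error of $2^l$ must be absorbed) and for the Chernoff argument in the $\ell_1$-oblivious-routing analysis. Two trivial points you should still state: the levels $l=0$ and $l=L$ are not produced by the argmin rule (singletons, resp.\ an arbitrary center), but contribute $0$ and at most $1$ to the sum since $2^L \ge nW \ge \diam_G(G)$.
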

\begin{remark}\label{rmk:nestingRSDecomp}
For $0 \leq l\leq L$, we define $\mathcal{C}_{\geq l}$ to be the coarsest common refinement of $\mathcal{C}_l, \mathcal{C}_{l+1}, \ldots, \mathcal{C}_L$. For $v \in C \in \mathcal{C}_{\geq l}$, we define $\textsc{Center}_{\geq l}(v) = \argmin_{r \in C} \dist(r, \textsc{Center}_l(v))$. 

Then, the hierarchy $\mathcal{C}_0, \mathcal{C}_1, \ldots, \mathcal{C}_L$ along with functions $\textsc{Center}_{\geq 0}, 
 \textsc{Center}_{\geq 1}, \ldots, \textsc{Center}_{\geq L}$ satisfies the properties above and is refining. Therefore, we say it is a \emph{refining hierarchical random-shift decomposition}.
\end{remark}

Since the sets in $\mathcal{C}_{\geq 0}, \mathcal{C}_{\geq 1}, \ldots, \mathcal{C}_{\geq L}$ form a laminar family, they naturally induce a tree $T$, and it is not hard to show that $T$ forms an $O(\log n)$-approximate probabilistic tree embedding. 

We then show that a hierarchical \emph{approximate} random-shift decomposition $\widehat{\mathcal{C}}_0, \widehat{\mathcal{C}}_1, \ldots, \widehat{\mathcal{C}}_L$ as suggested in \cite{blurryballgrowing-becker2019} yields almost the same properties as stated in \Cref{rmk:nestingRSDecomp}: the only difference being that each clustering only forms a subpartition, and each vertex $v$ is only clustered with probability at least $1/2$ (and thus the first property only applies to levels where $v$ was clustered and thus a center for $v$ exists). We show further that, up to constant factors, the refinements $\widehat{\mathcal{C}}_{\geq 0}, \widehat{\mathcal{C}}_{\geq 1}, \ldots, \widehat{\mathcal{C}}_{\geq L}$ satisfy the same properties as its exact counterpart, given in \Cref{rmk:nestingRSDecomp}.

Since the algorithm from \cite{blurryballgrowing-becker2019} only requires approximate SSSP as a primitive, it can be implemented optimally up to logarithmic factors in the sequential, parallel, distributed, and semi-streaming settings\footnote{We refer the reader to \cite{blurryballgrowing-becker2019} for precise definitions of the PRAM, CONGEST and semi-streaming models. We further point the reader to \cite{rozhovn2022deterministic} for information about a simpler minor-aggregation model that can be implemented efficiently in PRAM and CONGEST models.}. This implies fast algorithms for probabilistic tree embeddings in these settings.

\begin{theorem}[Main Result for Probabilistic Tree Embeddings]\label{thm:unfiedAppraochTreeEmbeddings}
Given graph $G=(V,E,w)$, there is a randomized algorithm that returns a $O(\log n)$-approximate tree embedding $T$ of $G$. 

The algorithm can be implemented to run
\begin{enumerate}
    \item in time $O(m \log n)$ in the sequential model.
    \item in polylogarithmic depth and almost-linear work in the PRAM model.
    \item in $\tilde{O}(\sqrt{n} + \textsc{HopDiam}(G))$\footnote{In this article, we let $n$ denote the number of vertices of the input graph and write $\tilde{O}(f(k))$ to mean $O(f(k) \log^c(n))$ for any constant $c$.} rounds with message complexity $\tilde{O}(m)$ in the CONGEST model where $\textsc{HopDiam}(G)$ is the unweighted (i.e. hop) diameter of the graph $G$.
    \item in $\tilde{O}(1)$ rounds in the semi-streaming model with high probablility.
\end{enumerate}
\end{theorem}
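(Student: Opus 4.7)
The plan is to construct $T$ from an approximate refining hierarchical random-shift decomposition with scales $D_l = 2^l$ for $0 \le l \le L := \lceil \log_2(nW) \rceil$, analyze its expected stretch via the two guarantees of \Cref{thm:randomShiftExactMain}, and then instantiate each of the four efficiency claims by invoking the fastest known approximate single-source shortest path (SSSP) subroutine in the corresponding model.

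First, I would build the laminar family $\widehat{\mathcal{C}}_{\geq 0}, \ldots, \widehat{\mathcal{C}}_{\geq L}$ by calling the approximate random-shift procedure of \cite{blurryballgrowing-becker2019} at each scale $D_l$; since each call only returns a sub-partition in which every vertex is clustered with probability $\ge 1/2$, I would run $O(\log n)$ independent repetitions per level and union-bound so that every vertex is clustered at every level w.h.p. I would take $T$ to be the Hasse diagram of the resulting refined family and weight the edge between a level-$l$ cluster $C$ and its level-$(l{+}1)$ parent $C'$ by $\dist_G(\textsc{Center}_{\geq l}(v), \textsc{Center}_{\geq l+1}(v))$ for any $v \in C$, which is well-defined by consistency of the center functions. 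I would inflate the weight by a constant to absorb the multiplicative SSSP approximation error and to ensure non-contraction.

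For the distortion bound, fix $u, v \in V$ and let $l^*$ be the smallest level at which $u$ and $v$ belong to the same refined cluster. The triangle-inequality chain shown before \Cref{rmk:nestingRSDecomp} yields
\[
\dist_T(u,v) \;\le\; 2 \sum_{l=0}^{l^*} \bigl(\dist_G(u, \textsc{Center}_{\geq l}(u)) + \dist_G(v, \textsc{Center}_{\geq l}(v))\bigr),
\]
and combining \Cref{thm:randomShiftExactMain}(1), which gives $O(\log n)$ total contribution from center distances across scales on average, with \Cref{thm:randomShiftExactMain}(2), which bounds the per-level separation probability by $O(\dist_G(u,v)/2^l)$, via standard level-by-level FRT-style charging, yields $\mathbb{E}[\dist_T(u,v)] = O(\log n) \cdot \dist_G(u,v)$.

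The four efficiency claims then reduce to running an $O(\log n)$-scale hierarchy where each scale needs one approximate SSSP computation on the graph augmented with a super-source: Dijkstra per scale in $O(m \log n)$ time sequentially (aggregated across scales via bucketing); a polylogarithmic-depth, almost-linear-work approximate SSSP in PRAM such as \cite{li2020faster}; the blurry-ball-growing primitive of \cite{blurryballgrowing-becker2019} in $\tilde O(\sqrt{n} + \textsc{HopDiam}(G))$ rounds in CONGEST; and its streaming counterpart for the semi-streaming bound. The hard part of the plan is verifying that the \emph{approximate} hierarchical decomposition inherits, up to constants, both guarantees of \Cref{thm:randomShiftExactMain}: one must simultaneously handle the sub-partition issue (by independent repetition and a union bound) and the $(1+\varepsilon)$-multiplicative SSSP error (by choosing $\varepsilon$ small enough that the telescoping sum above is inflated by only a constant factor), all without breaking the tight constant-on-average trade-off established in the exact case.
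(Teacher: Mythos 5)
Your overall architecture matches the paper's (hierarchical random-shift decomposition at scales $2^l$, refining laminar family, tree with center-to-center edge weights, per-model approximate SSSP), but the one place where you deviate is exactly where a gap opens. To handle the fact that each approximate call only produces a sub-partition, you propose $O(\log n)$ independent repetitions per level so that every vertex is clustered at every level w.h.p. First, this is underspecified: the repetitions yield overlapping, mutually inconsistent clusterings, so ``union-bounding'' does not by itself produce a single partition per level; you would need a combination rule (e.g.\ assign each vertex to its first successful repetition) and then re-prove \emph{both} properties of \Cref{thm:randomShiftExactMain} for the combined object, in particular that two nearby vertices landing in different repetitions are still separated with probability only $O(\dist_G(u,v)/2^l)$ and that distances to the combined centers are still dominated by $O(r_l+X_l)$. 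The paper avoids this entirely: it never forces full coverage, but instead lets an unclustered vertex at level $l$ inherit the center from the smallest level $l'\geq l$ at which it is clustered, and uses independence across levels plus the probability-$1/2$ clustering guarantee of \Cref{lma:ballInCluster} to show the geometric series of deferrals costs only a constant factor (the paragraph ``The first property for refining partitions'' in \Cref{sec:analysisApprox}).

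Second, the repetition idea breaks your claimed sequential bound: $O(\log n)$ repetitions over $O(\log n)$ levels means $\Theta(\log^2 n)$ SSSP computations, i.e.\ at best $O(m\log^2 n)$, not $O(m\log n)$. Moreover, even without repetitions, Dijkstra costs $O(m+n\log n)$ per scale, so ``Dijkstra per scale'' gives $O(m\log n + n\log^2 n)$, which exceeds $O(m\log n)$ on sparse graphs; the paper instead runs the \emph{exact} decomposition (\Cref{alg:randomShift}, which always outputs a full partition, so no coverage issue arises sequentially) with Thorup's linear-time SSSP, one call per level. The stretch analysis you sketch and the parallel/distributed/streaming instantiations via \cite{blurryballgrowing-becker2019} with improved approximate SSSP are in line with the paper; it is the coverage-by-repetition step and the resulting sequential accounting that need to be replaced or reworked.
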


The theorem above recovers the fastest runtime of any sequential algorithm to construct FRT trees \cite{blelloch2017efficient} while being significantly simpler; the first parallel algorithm that simultaneously achieves work $\tilde{O}(m)$ and $\tilde{O}(1)$ depth; and the first distributed algorithm matching the lower bound from \cite{das2011distributed} on the number of rounds up to polylogarithmic factors. In the latter two settings, this yields a polynomial improvement for non-dense graphs over the state of the art \cite{blelloch2020parallelism, friedrichs2018parallel, ghaffari2014near}.

Further, we show that the $\ell_1$-oblivious routings from \cite{rozhovn2022deterministic} have $O(\log n)$-approximation. While we essentially use their construction, we give a different proof that leverages the insights from \Cref{thm:randomShiftExactMain} to obtain the tight approximation factor.

\begin{theorem}[Main Result for $\ell_1$-Oblivious Routing]
\label{thm:oblviousRouting}
Given graph $G=(V,E,w)$, there is a randomized algorithm that returns an $O(\log n)$-approximate $\ell_1$-oblivious routing $\AA$ of $G$ w.h.p. 

The algorithm can be implemented to run
\begin{enumerate}
    \item in time $\tilde{O}(m)$ in the sequential model.
    \item in polylogarithmic depth and almost-linear work in the PRAM model.
    \item in $\tilde{O}(\sqrt{n} + \textsc{HopDiam}(G))$ rounds with message complexity $\tilde{O}(m)$ in the CONGEST model.
\end{enumerate}
\end{theorem}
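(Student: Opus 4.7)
Our plan is to build the oblivious-routing matrix $\AA$ on top of the tree embedding from \Cref{thm:unfiedAppraochTreeEmbeddings}. For a sampled refining hierarchical random-shift decomposition $\mathcal{C}_{\geq 0}, \ldots, \mathcal{C}_{\geq L}$ with center functions $\textsc{Center}_{\geq l}$ as in \Cref{rmk:nestingRSDecomp}, let $T$ be the induced laminar tree. Each tree edge between $C \in \mathcal{C}_{\geq l}$ and its parent $C' \in \mathcal{C}_{\geq l+1}$ carries weight $\dist_G(\textsc{Center}_{\geq l}(v), \textsc{Center}_{\geq l+1}(v))$ (for any $v \in C$), realized by a $G$-shortest path $\pi_{C,C'}$ already present in the SSSP trees produced by \Cref{alg:randomShift}. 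Let $\AA_T \in \R^{E \times V}$ be the deterministic linear operator that sends a demand $\dd$ to the unique tree-flow on $T$ lifted back to $G$ along the paths $\pi_{C,C'}$, and set $\AA := \tfrac{1}{k}\sum_{i=1}^k \AA_{T_i}$ for $k = \Theta(\log n)$ independent samples $T_i$.

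To analyze $\AA$, we first plan to establish the per-edge expectation bound $\mathbb{E}[\cost_G(\AA_T(\chi_u - \chi_v))] \leq O(\log n)\cdot w(u,v)$ for every edge $e = (u,v) \in E$. Given this, flow decomposition closes the argument: for any demand $\dd$ with optimal flow $\ff^*$ we have $\dd = \sum_{e=(u,v)} \ff^*_e (\chi_u - \chi_v)$, and by linearity and the triangle inequality
\[
\cost_G(\AA\dd) \leq \sum_{e=(u,v) \in E} |\ff^*_e|\,\cost_G(\AA(\chi_u - \chi_v)).
\]
For the per-edge bound itself, the $G$-cost of $\AA_T(\chi_u - \chi_v)$ equals $\dist_T(u,v)$ because each tree-edge lift is tight, and telescoping along the $T$-path from $u$ to $v$ together with the triangle inequality yields
\[
\dist_T(u,v) \leq \sum_{l\,:\,\mathcal{C}_{\geq l}(u) \neq \mathcal{C}_{\geq l}(v)} \bigl(\dist_G(u, \textsc{Center}_{\geq l}(u)) + \dist_G(v, \textsc{Center}_{\geq l}(v))\bigr).
\]
Taking expectations and combining the separation-probability and expected-distance-to-center bounds of \Cref{thm:randomShiftExactMain} (which also apply to the refined hierarchy by \Cref{rmk:nestingRSDecomp}) gives $\mathbb{E}[\dist_T(u,v)] = O(\log n)\,w(u,v)$, as required.

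The algorithmic claims follow directly from \Cref{thm:unfiedAppraochTreeEmbeddings}: we invoke its algorithm $k = \tilde{O}(1)$ times, and the succinct representation of each $\AA_{T_i}$ consists of the SSSP trees produced at each level, so multiplication by $\AA$ or $\AA^\top$ reduces to standard tree and path traversals and fits within the stated work/depth/round/message budgets. The main obstacle will be promoting the per-edge bound, which so far holds only in expectation and per demand, to a simultaneous w.h.p.\ competitive-ratio guarantee for the single matrix $\AA$. Our plan is to combine concentration across the $k = \Theta(\log n)$ sampled trees with a union bound over the $m$ edge demands, using the flow-decomposition reduction above to argue that edge demands are the only witnesses that matter; an additional subtlety is the approximate (subpartition) random-shift decomposition required for the near-optimal algorithms, where each vertex is clustered per level only with probability $\geq 1/2$, so we will boost each level by $O(\log n)$ independent repetitions to recover the exact analysis without exceeding the round/work/depth budgets.
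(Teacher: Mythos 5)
There is a genuine gap at the step you yourself flag as the ``main obstacle,'' and it is not fixable by the route you sketch. For a single sampled tree $T$, the per-edge stretch $Z = \dist_T(u,v)/w(u,v)$ has expectation $O(\log n)$ but a heavy, essentially Markov-type tail: $u,v$ are separated at scale $2^l$ with probability $\approx \min\{1,\Delta/2^l\}$ and then pay $\approx 2^l$, so $\P[Z > t] \approx 1/t$ up to values $t = \mathrm{poly}(n)$. Consequently the empirical average over $k=\Theta(\log n)$ (indeed, over \emph{any} number $k$) of independent trees does not concentrate at $O(\log n)$ with inverse-polynomial failure probability: a single sample of size $\approx C k\log n$, which occurs with probability $\approx \frac{k \cdot 1}{Ck\log n} = \Omega(1/\log n)$ per edge, already pushes the average above $C\log n$. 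So the per-edge failure probability is $\Omega(1/\log n)$ regardless of $k$, the union bound over $m$ edges collapses, and the fixed matrix $\AA = \frac1k\sum_i \AA_{T_i}$ cannot be shown (and plausibly fails) to be $O(\log n)$-competitive w.h.p.; only the full mixture $\E_T[\AA_T]$, which you cannot compute or apply efficiently, has that guarantee. Your proposed boosting of the clustering probability for the approximate decompositions addresses a different (minor) issue and does not touch this one.

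The paper circumvents exactly this obstruction by not routing along sampled trees at all. It uses $D=\Theta(\log n)$ independent decompositions \emph{per level} and routes each vertex's flow fractionally to centers at consecutive levels with smooth weights $p_{l,d,C}(v)=\min\{1,\widehat{\dist}(v,V\setminus C)/2^l\}$ normalized by $w_l(v)$. The Lipschitz property of these weights gives a \emph{deterministic} bound of $O(\Delta/2^l)$ on the total uncanceled flow at level $l$ (replacing the Bernoulli ``separated or not'' event that causes the heavy tail), so the only remaining randomness is in the lengths of the center-to-center paths; these are stochastically dominated, per level and in units of $2^l$, by $O(r_l + X_l)$ with $X_l \sim \textsc{Exp}(1)$, and the Chernoff bound for sums of exponentials together with \Cref{lem:RSLDD:sum-over-r-is-logarithmic} yields the w.h.p.\ per-edge bound that survives a union bound over all edges. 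Your construction and expectation analysis are fine as a re-derivation of the $O(\log n)$ \emph{expected} tree-embedding bound, but to prove \Cref{thm:oblviousRouting} you need some mechanism of this smoothing/averaging kind (or a verification-and-resample argument applied to a construction whose failure probability is already inverse-polynomial), not concentration of sampled tree routings.
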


Notably, this yields the first $\tilde{O}(m)$ algorithm to construct $\ell_1$-oblivious routings of asymptotically optimal competitive ratio! While $\tilde{O}(m)$ time algorithms were previously known, all of them obtained rather large subpolynomial or polylogarithmic factors in the competitive ratio.

\paragraph{Roadmap.} We defer a detailed discussion of related work to \Cref{sec:sota}. In \Cref{sec:analysisExact}, we prove \Cref{thm:randomShiftExactMain}. In \Cref{sec:analysisApprox}, we prove that almost the same guarantees hold for approximate random-shift decompositions as suggested in \cite{blurryballgrowing-becker2019}. Finally, in \Cref{sec:treeEmbeddings}, we analyze the resulting tree embeddings, and in \Cref{sec:obliviousRouting}, we give a novel analysis of the $\ell_1$-oblivious routing construction suggested in \cite{rozhovn2022deterministic} yielding an $\ell_1$-oblivious routing with tight competitive ratio.

\section{Preliminaries}

\paragraph{Graphs.} We denote graphs as tuples $G = (V, E, w : E \mapsto [1,W])$ with $n:= |V|$ and $m := |E|$. In this article, all graphs are undirected and weighted. For a subset $A \subset V$, we use $G[A]$ to denote the vertex-induced graph and $E[A]$ to denote the set of edges with both endpoints in $A$, i.e. $E[A] = \{ \{u,v\} \in E, u,v \in A\}$. For disjoint subsets $A, B\subset V$, we denote by $E(A, B)$ edges in $E$ with exactly one endpoint in $A$ and $B$. We denote by $dist_G(s,t)$ the distance in $G$ under $w$ from $s$ to $t$. We let $B_G(v,r)$ denote the closed ball centered at $v$ with radius $r$. We often omit the subscript $G$ when the underlying graph $G$ is clear.

\paragraph{Exponential Distribution.} We recall that the exponential distribution with mean $D$, denoted by $\textsc{Exp}(D)$, has \emph{probability density function} (pdf) $p(x) = e^{- x/D} / D$ and \emph{cumulative density function} (cdf) $F(x) = 1 - e^{-x/D}$. 

The exponential distribution has the memoryless property, i.e. for $X \sim \textsc{Exp}(D)$, we have $\P[X > x + y | X > y] = \P[X > x]$ for all $x,y \geq 0$. Note further, that the exponential distribution satisfies that $X \sim \textsc{Exp}(1)$ iff $D \cdot X \sim \textsc{Exp}(D)$ for any $D > 0$.

\paragraph{Chernoff Bounds.} We use the classic Chernoff bound for i.i.d. random variables $X_1, X_2, \ldots, X_n \in [0,1]$, that for $\mu = \sum_i X_i$, we have for every $\delta > 0$, that $\P[|X - \mu| > \delta \mu] < 2\cdot e^{-\delta^2 \mu/3}$. 

We also use the Chernoff bound for i.i.d. random variables $X_1, X_2, \ldots, X_n \sim \textsc{Exp}(1)$ which yields that for $X = \sum_i X_i$, we have for any $\delta > 0$, that $\P[X > (2+\delta) n] < e^{-\delta n/2}$. Note that this bound does not make any assumptions on the values that variables $X_i$ can take. The bound can be derived straightforwardly via the moment-generating function of the exponential distribution. For a lack of an accessible resource, we give a short proof in \Cref{sec:chernoff}.

\paragraph{Logarithms.} We denote by $\log(x)$ the natural logarithm of $x$ and by $\lg(x)$ the logarithm of $x$ with base $2$.

\section{Random-Shift Decompositions via Exact SSSP}
\label{sec:analysisExact}

In this section, we give a proof of \Cref{thm:randomShiftExactMain} for the case when $\mathcal{C}_1, \mathcal{C}_2, \ldots, \mathcal{C}_{L-1}$ are computed via the standard random-shift algorithm from \Cref{alg:randomShift}. 

\paragraph{Correctness of the first property via a simple ball growing process.} At the heart of our new analysis lies a simple ball growing process over levels $l$. Fixing the vertex $v \in V$, we define the process by a sequence of integers $r_l$ such that balls $B(v, r_l 2^l)$ are growing monotonically.

Let $r_0 := 1$ with $|B(v,r_02^0)| \ge 1$ and 
let $r_l := r_{l-1}/2 + \Delta_l + 2$, where we choose $\Delta_l \;(l \ge 1)$ as the largest integer such that growing the radius of the ball can be paid for by an exponential growth in the number of vertices in it, namely
\begin{equation}\label{eq:setIncrease}
    \Delta_l := \max \xi \text{ satisfying } |B(v,(r_{l-1}/2+\xi+2) 2^l)| \ge e^{\xi/2} \cdot |B(v,r_{l-1}2^{l-1})|. 
\end{equation}
Note that $r_l2^l = (r_{l-1}/2 + \Delta_l + 2)2 \cdot 2^{l-1} > r_{l-1}2^{l-1}$, thus $r_l 2^l$ grows monotonically as claimed. 

In the remainder of this section, we prove the following two lemmas. The first lemma says that the distance to the center is well captured by $r_l 2^l$ on level $l$ (in expectation). The second yields that the sum of $r_l$'s is small, and thus, since $r_l$ can be thought of as the loss-per-level, we have a small loss overall. Combined, these lemmas yield Property 1 as a corollary. 
\begin{lemma}[Distances-to-Center] 
    \label{lem:distFromR}
    For any $0 \leq l \leq L$, $\mathbb{E}[\dist(v, \Center_l(v))] = O(r_l \cdot 2^l)$.
\end{lemma}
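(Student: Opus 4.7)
My plan is to first unpack what $\Center_l(v)$ actually is. Examining the SSSP tree $T^s$ rooted at the augmented source $s$, the shortest $s$-to-$v$ path in $G^s$ has length $\min_u \bigl((M - \delta^u) + \dist_G(u,v)\bigr)$, where $M = \max_w \delta^w$, and its first internal vertex---which equals $\Center_l(v)$ by construction---is precisely $\argmax_{u \in V}(\delta^u - \dist_G(u,v))$, with $\delta^u \sim \textsc{Exp}(2^l)$ i.i.d.\ Next I would establish the key quantitative estimate: for every vertex $u$,
\[
\P\bigl[u = \Center_l(v)\bigr] \;\leq\; \frac{e^{-\dist(v,u)/2^l}}{Z}, \qquad Z := \sum_{w \in V} e^{-\dist(v,w)/2^l}.
\]
To prove this, I would condition on $\delta^u = x$ (noting $x \geq \dist(v,u)$ is forced for $u$ to win), observe that the event asks $\delta^w < x - \dist(v,u) + \dist(v,w)$ independently for every other $w$, and apply $1-t \leq e^{-t}$ together with independence to bound the conditional probability by $\exp\!\bigl(-e^{-(x-\dist(v,u))/2^l} \cdot Z\bigr)$. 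A clean substitution $y = e^{-(x-\dist(v,u))/2^l}$ reduces the integral against the exponential density of $\delta^u$ to $e^{-\dist(v,u)/2^l}\int_0^1 e^{-yZ}\,\mathrm{d}y \leq e^{-\dist(v,u)/2^l}/Z$, finishing the bound.

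With this in hand, $\mathbb{E}[\dist(v, \Center_l(v))] \leq Z^{-1}\sum_u \dist(v,u)\,e^{-\dist(v,u)/2^l}$. I would lower bound the denominator by restricting the sum to the previous-level ball, giving $Z \geq |B(v, r_{l-1}2^{l-1})| \cdot e^{-r_{l-1}/2}$. The numerator I would split at $r_l 2^l$. The inside contribution is trivially at most $r_l 2^l \cdot Z$, yielding $r_l 2^l$ after dividing by $Z$. For the outside contribution I would bucket into annular shells indexed by integer $\xi \geq 1$ with outer radius $(r_{l-1}/2 + \Delta_l + \xi + 2)\,2^l$; the maximality of the integer $\Delta_l$ in \eqref{eq:setIncrease} guarantees each such ball contains fewer than $e^{(\Delta_l + \xi)/2} \cdot |B(v, r_{l-1}2^{l-1})|$ vertices. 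Combining with the weight bound $e^{-(r_{l-1}/2 + \Delta_l + \xi + 1)}$ on points in the shell and the crude distance bound $(r_l + \xi)\,2^l$, the tail becomes a convergent geometric series whose total, after dividing by the lower bound on $Z$, is $O(r_l 2^l \cdot e^{-\Delta_l/2}) = O(r_l 2^l)$. Adding the two contributions completes the bound.

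\paragraph{Main obstacle.} The most delicate step is the per-vertex probability bound $\P[u = \Center_l(v)] \leq e^{-\dist(v,u)/2^l}/Z$. It requires identifying the right conditioning, using the one-sided inequality $1-t \leq e^{-t}$ to convert a product into the exponential of a sum, and a successful change of variables to integrate out $\delta^u$. It is tempting to mis-state this as an exact equality in the style of the classical argmin-of-exponentials formula $\lambda_u/\sum_w \lambda_w$, but that identity genuinely fails for the argmax of \emph{shifted} exponentials, so one must proceed via the one-sided estimate above. Once this bound is in place, the remaining work is a routine shell-decomposition driven by the definition of $\Delta_l$ and a geometric-series estimate, where I do not anticipate surprises.
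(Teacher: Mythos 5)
Your overall strategy is sound and, at the structural level, is a repackaging of the paper's argument rather than a fundamentally different one: both proofs decompose $V \setminus B(v, r_l 2^l)$ into annuli of width $2^l$, use the \emph{maximality} of $\Delta_l$ in \eqref{eq:setIncrease} to bound the number of vertices per annulus, and finish with a geometric series. The genuine difference is the probability estimate. The paper bounds, for $w$ in the $i$-th annulus, $\P[v \in C^w] \le \P[\max_{u \in B(v,r_l 2^l)} \delta^u < \delta^w - i\cdot 2^l] < e^{-i}/|B(v,r_l2^l)|$, i.e.\ it compares $\delta^w$ against the maximum shift inside the core ball. You instead derive a per-vertex ``softmax''-type bound $\P[u = \Center_l(v)] \lesssim e^{-\dist(v,u)/2^l}/Z$ with $Z = \sum_w e^{-\dist(v,w)/2^l}$, and lower-bound $Z$ by the previous-level ball; this is a perfectly viable route (it only uses the maximality side of the definition of $\Delta_l$, whereas the paper also uses the inequality $|B(v,r_l2^l)| \ge e^{\Delta_l/2}|B(v,r_{l-1}2^{l-1})|$), and your shell bookkeeping and the final $O(r_l 2^l e^{-\Delta_l/2})$ tail are correct.

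There is, however, one concrete slip in your key estimate. After conditioning on $\delta^u = x$, the product runs over $w \neq u$, so $1-t\le e^{-t}$ gives the bound $\exp\bigl(-e^{-(x-\dist(v,u))/2^l}\,(Z - e^{-\dist(v,u)/2^l})\bigr)$, \emph{not} $\exp\bigl(-e^{-(x-\dist(v,u))/2^l}\,Z\bigr)$; the latter is smaller than what the inequality justifies, so the step as written is invalid. Indeed the clean claim $\P[u=\Center_l(v)] \le e^{-\dist(v,u)/2^l}/Z$ is false in general: take $u$ adjacent to $v$ with $w(u,v)=1$, one further vertex at distance $2^l\ln 2$, and $2^l$ huge; then the left side is $\approx 5/12$ while the right side is $\approx 2/5$. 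The fix is one line: with the correct normalizer $Z' = Z - e^{-\dist(v,u)/2^l} \ge Z-1$ you get $\P[u=\Center_l(v)] \le e^{-\dist(v,u)/2^l}\min\{1, 1/Z'\} \le 2\,e^{-\dist(v,u)/2^l}/Z$ (using $Z \ge 1$, since the $w=v$ term contributes $1$), and an $O(e^{-\dist(v,u)/2^l}/Z)$ bound is all your shell computation needs, so the lemma's $O(r_l\cdot 2^l)$ conclusion is unaffected.
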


\begin{lemma}[Sum-of-Deviations] 
    \label{lem:RSLDD:sum-over-r-is-logarithmic}
    $\sum_{l=0}^{L} r_l \in O(\log n)$.
\end{lemma}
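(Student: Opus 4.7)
The plan is to unroll the recurrence for $r_l$, exchange the order of summation, and then exploit the fact that the ball size grows multiplicatively by the factor $e^{\Delta_l/2}$ at every level, so that the cumulative growth is capped by $n$.

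First I would unroll the recurrence $r_l = r_{l-1}/2 + \Delta_l + 2$ (with $r_0 = 1$) into the closed form
\[
 r_l \;=\; \frac{r_0}{2^l} + \sum_{i=1}^{l} \frac{\Delta_i + 2}{2^{\,l-i}} .
\]
Summing over $l = 0, 1, \ldots, L$ and swapping the order of the inner and outer sums gives
\[
 \sum_{l=0}^{L} r_l \;\le\; r_0 \sum_{l=0}^{L} \frac{1}{2^l} + \sum_{i=1}^{L} (\Delta_i + 2) \sum_{l=i}^{L} \frac{1}{2^{\,l-i}} \;\le\; 2 r_0 + 2 \sum_{i=1}^{L} (\Delta_i + 2)
 \;\le\; O(L) + 2 \sum_{i=1}^{L} \Delta_i ,
\]
using that both geometric sums are bounded by $2$. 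Since the number of levels is $L = O(\log n)$, the term $O(L)$ is already of the desired order, so everything reduces to bounding $\sum_i \Delta_i$.

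For this second, key step I would exploit the defining inequality of $\Delta_l$ directly. By the choice of $\Delta_l$ in \eqref{eq:setIncrease}, and noting that the ball appearing on the left is exactly $B(v, r_l \cdot 2^l)$ (since $r_l 2^l = (r_{l-1}/2 + \Delta_l + 2) 2^l$), we have
\[
 |B(v, r_l \cdot 2^l)| \;\ge\; e^{\Delta_l / 2} \cdot |B(v, r_{l-1} \cdot 2^{l-1})|.
\]
Telescoping this from $l = 1$ up to $l = L$ gives
\[
 n \;\ge\; |B(v, r_L \cdot 2^L)| \;\ge\; e^{\frac{1}{2} \sum_{l=1}^{L} \Delta_l} \cdot |B(v, r_0)| \;\ge\; e^{\frac{1}{2} \sum_{l=1}^{L} \Delta_l},
\]
so $\sum_{l=1}^{L} \Delta_l \le 2 \ln n = O(\log n)$. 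Plugging this into the bound from the previous paragraph yields $\sum_{l=0}^{L} r_l = O(\log n)$, as required.

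The only subtlety worth double-checking is that $\Delta_l$ is well-defined and non-negative, so that the telescoping product is really a lower bound by $1$ at each step. This follows because $\xi = 0$ always satisfies the condition in \eqref{eq:setIncrease}: the inner radius $(r_{l-1}/2 + 2)\, 2^l = r_{l-1}\, 2^{l-1} + 2^{l+1}$ is strictly larger than the outer radius $r_{l-1}\, 2^{l-1}$, so the enclosed ball can only be larger. Hence $\Delta_l \ge 0$ always, the telescoping gives a genuine lower bound, and all three pieces combine to the claim. I expect the only mildly tricky part to be the index-swapping step, but once the geometric factor $2^{-(l-i)}$ is exposed the bookkeeping is routine.
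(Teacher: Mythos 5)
Your proof is correct and follows essentially the same route as the paper: unroll the recurrence $r_l = r_{l-1}/2 + \Delta_l + 2$, swap the order of summation so each $\Delta_i$ is scaled by a geometric series, and bound $\sum_{l} \Delta_l \le 2\ln n$ by telescoping the multiplicative ball-growth condition against the total vertex count $n$ (using $L = O(\log n)$ to absorb the additive constants). The extra remark that $\xi = 0$ is always feasible, so $\Delta_l \ge 0$, is a sound sanity check that the paper leaves implicit.
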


\paragraph{Bounding the distances to centers (Proof of \Cref{lem:distFromR}).} Let us fix any level $0 < l < L$ (for $l \in \{0, L\}$ the proof is trivial). We let the variables $\delta^u$ and $C^u$ refer to the values that these variables take in \Cref{alg:randomShift} when computing $\mathcal{C}_l$.

For every $i \geq 0$, we let $B_{i} = B(v,(r_{l} + i) 2^l)$ and define $I_{i} = B_{i+2} \setminus B_{i+1}$. We can now rewrite the expectation as
\begin{align*}
    \E[\dist(v, \textsc{Center}_l(v))] &= \sum_{w \in V} \P[v \in C^w] \cdot \dist(w,v)
    \leq O(r_l \cdot 2^l) + \sum_{i \geq 1} \sum_{w \in I_i} \P[v \in C^w] \cdot (i+1) \cdot 2^l.
\end{align*}
It remains to bound the probabilities that a vertex in $I_i$ becomes the center of $v$. Note that a vertex $w \in I_i$ cannot become a center if for some $u \in B_0$, $\delta^u + i \cdot 2^l > \delta^w$, since then $v$ prefers $u$ over $w$ as a center. We obtain for every $w \in I_i$
{
\begin{align}\label{eq:mainAnalysisIntegration}\begin{split}
    \P[v \in C^w] &
    \leq \P[\max_{u \in B_0} \delta^u < \delta^w - i \cdot 2^l]\\
    &= \int_{x = i \cdot 2^l}^{\infty} \frac{1}{2^l} e^{-x/2^l} \cdot \P[\max_{u \in B_0} \delta^u + i \cdot 2^l < x] dx\\
    &= \int_{x = i \cdot 2^l}^{\infty} \frac{1}{2^l} e^{-x/2^l} \cdot (1-e^{-x/2^l + 
 i})^{|B_0|}dx\\
    &= \int_{x = 0}^{\infty} \frac{1}{2^l} e^{-x/2^l - i} \cdot (1-e^{-x/2^l})^{|B_0|}dx\\
    &= e^{-i} \int_{x = 0}^{\infty} \frac{1}{2^l} e^{-x/2^l} \cdot (1-e^{-x/2^l})^{|B_0|}dx\\
     \end{split}\end{align}
 \begin{align*}\begin{split}
    &\leq e^{-i} \int_{x = 0}^{\infty} \frac{1}{2^l} e^{-x/2^l} \cdot e^{- |B_0| \cdot e^{-x/2^l}}dx\\
    &= e^{-i} \cdot \left[ \frac{e^{- |B_0| \cdot e^{-x/2^l}}}{|B_0|} \right]_{x = 0}^{\infty}\\
    &=  e^{-i} \cdot \left(\frac{1}{|B_0|} - \frac{e^{- |B_0|}}{|B_0|}\right) \\ 
    &< \frac{e^{-i}}{|B_0|}.
\end{split}\end{align*} where in the first equality, we compute the probability by integrating over all values $x$ that $\delta^w$ can take. At each such value $x$, we take the density function $\frac{1}{2^l} e^{-x/2^l}$ of the exponential distribution times the probability that the maximum value $\delta^u$ over all $u \in B_0$ is smaller than $x$. Note that we start integrating only over $x \geq i \cdot 2^l$, since for smaller values, $\delta^w$ cannot exceed $\max_u \delta^u + i \cdot 2^l$ since $\delta^u \geq 0$.

In the second inequality, we use the probability that the maximum value over all variables $\delta^u$ is at most $x$ is given by the cumulative density function at value $x$ to the power $|B_0|$.

We substitute the value of $x$ in the third equality to simplify the expression, and pull out the factor $e^{-i}$ from the integral in the forth. We then use $1+y \leq e^y$ in the second inequality, and finally evaluate the integral.

We can therefore now conclude the proof using the above derivation combined with the bound $|I_i| < e^{(i+2)/2} |B_0|$ that follows from the growth process and $I_i \subseteq B_{i+2}$, which yields
\begin{align}\label{eq:concludeDistToCenter}
\begin{split}
     \E[\dist(v, \textsc{Center}_l(v)] &= O(r_l \cdot 2^l) + \sum_{i \geq 1} \sum_{w \in I_i} \P[v \in C^w] \cdot (i+1) \cdot 2^l\\
    &\leq O(r_l \cdot 2^l) + \sum_{i \geq 1} \sum_{w \in I_i}  \frac{e^{-i}}{|B_0|} \cdot (i+1) \cdot 2^{l}\\
    &= O(r_l \cdot 2^l) + \sum_{i \geq 1} |I_i| \cdot \frac{e^{-i}}{|B_0|} \cdot (i+1) \cdot 2^{l}\\
    &\leq O(r_l \cdot 2^l) + \sum_{i \geq 1} e^{(i+2)/2}|B_0| \cdot \frac{e^{-i}}{|B_0|} \cdot (i+1) \cdot 2^{l}\\
    &\leq O(r_l \cdot 2^l) + e \cdot \sum_{i \geq 1} e^{-i/2} \cdot (i+1) \cdot 2^{l}\\
     &= O(r_l \cdot 2^l)
\end{split}
\end{align}
by straightforward calculations and using that the last sum is geometric.

We point out that inspection of \eqref{eq:concludeDistToCenter} yields a slightly stronger statement: $\dist(v, \textsc{Center}_l(v))/2^l$ is stochastically dominated by $O(r_l + X_l)$ for some random variable $X_l \sim \textsc{Exp}(1)$.

\paragraph{Bounding the sum of deviations (Proof of \Cref{lem:RSLDD:sum-over-r-is-logarithmic}).} 
From the definition of $\Delta_l$ in \eqref{eq:setIncrease}, we have $|B(v,r_LD_L)| \ge e^{\Delta_L/2} \cdot |B(v,r_{L-1}D_{L-1}| \ge e^{\Delta_L/2} \cdot e^{\Delta_{L-1}/2} \cdots |B(v,r_0D_0)|$. But since there are at most $n$ vertices, we have $n \geq  |B(v,r_LD_L)| \ge |B(v,r_0D_0)| \cdot \prod_{l=1}^{L} e^{\Delta_l/2} \ge e^{\sum_{l=1}^{L} \Delta_l/2}$. Thus, $\sum_{l=1}^{L} \Delta_l \le 2 \ln n$. We thus get
\begin{align}
\nonumber
    \sum_{l=0}^{L} r_l &=\sum_{l=0}^{L} \left( r_0/2^l + \sum_{i=1}^l (\Delta_i+2)/2^{l-i} \right) &&\text{By definition of $r_l$}\\
\nonumber
    &= \sum_{l=0}^{L} r_0/2^l + \sum_{i=1}^{L} \sum_{l=i}^{L} (\Delta_i+2)/2^{l-i}  &&\text{Reordering the summation}\\
\nonumber
    &\le \sum_{l=0}^\infty r_0/2^l + \sum_{i=1}^{L} \sum_{l'=0}^\infty (\Delta_i+2)/2^{l'} &&\text{Using }l'=l-i\\
\label{eq:sumRbound}
    &= O(\log n).
\end{align} 

\paragraph{Correctness of the second property.} The second property, that $\mathbb{P}[ \mathcal{C}_l(u) \neq \mathcal{C}_l(v)] = O\left(\frac{\dist(u,v)}{2^l}\right)$ for all $0 \leq l \leq L$ and $u,v\in V$, follows from the definitions of $\mathcal{C}_0$ and $\mathcal{C}_L$ and the following insight first exploited in \cite{randomshift2013}.


\begin{lemma}[Lemma 4.4 of \cite{randomshift2013}]
\label{lem:gap1stvs2nddist}
Let \(d_{1} \le d_{2} \le \dots \le d_{n}\) be arbitrary values and let \(\delta_{1}, \dots, \delta_{n}\)
be independent random variables picked from \(\mathrm{Exp}(D)\),
and define $X_i = d_i - \delta_i$,
and define $X^{(1)} \leq X^{(2)} \leq \ldots \leq X^{(n)}$ to be the sequence of $\{X_i\}$ but sorted by increasing value. Then the probability that $X^{(2)} - X^{(1)} \leq c$ is at most 
\(c/D\).
\end{lemma}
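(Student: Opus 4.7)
My plan is to condition on the identity of the argmin and reduce to a one-dimensional exponential tail bound via memorylessness. For each $i \in [n]$, let $E_i$ be the event that $X_i < X_j$ for all $j \neq i$. Since the $\delta_i$'s are continuous, the events $\{E_i\}$ partition the probability space up to a measure-zero set of ties, so by the law of total probability it suffices to show $\Pr[X^{(2)} - X^{(1)} \leq c \mid E_i] \leq c/D$ for each fixed $i$, and then average with weights $\Pr[E_i]$.

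For that, I fix $i$ and further condition on the values $(\delta_j)_{j \neq i}$, which determine the threshold
\[ \tau := d_i - \min_{j \neq i}(d_j - \delta_j). \]
Unpacking definitions yields two clean identities: the event $E_i$ is equivalent to $\delta_i > \tau$, and the gap satisfies $X^{(2)} - X^{(1)} = \min_{j \neq i} X_j - X_i = \delta_i - \tau$. Since $\delta_i \sim \textsc{Exp}(D)$ is independent of $(\delta_j)_{j \neq i}$, the memoryless property yields that, conditional on $\delta_i > \tau$, the shifted variable $\delta_i - \tau$ is itself $\textsc{Exp}(D)$-distributed. Combined with the elementary bound $1 - e^{-c/D} \leq c/D$, this gives the required conditional probability bound; averaging first over $(\delta_j)_{j \neq i}$ and then over $i$ completes the argument.

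The entire proof rests on this single observation, so I do not foresee a substantive obstacle. The only care-point is ensuring that conditioning on $E_i$ is compatible with applying memorylessness to $\delta_i$, which is automatic because $\tau$ is $(\delta_j)_{j \neq i}$-measurable and hence independent of $\delta_i$, so the conditioning $E_i = \{\delta_i > \tau\}$ acts purely as a threshold on the marginal of $\delta_i$.
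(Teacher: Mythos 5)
Your proof is correct and is essentially a formalization of the paper's own memorylessness sketch: you condition on which index attains the minimum (the paper's ``last bulb''), freeze the other delays to fix the threshold $\tau$, and apply the memoryless property of $\delta_i$. The one point worth making explicit is the case $\tau < 0$ (the paper's ``bulb not yet turned on''), where conditioning on $E_i$ is vacuous and $\delta_i - \tau$ is not exactly $\mathrm{Exp}(D)$ but stochastically dominates it, so the bound survives via $\Pr[\delta_i - \tau \le c \mid E_i] = \Pr[\delta_i \le c + \tau] \le 1 - e^{-c/D} \le c/D$.
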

\begin{proof}[Sketch of proof]
We model each value $-X_i = \delta_i - d_i $ as the failure time of a light bulb following an exponential distribution with rate \(\beta\) which is turned on at time $-d_i$. The quantity \(X^{(1)} = \min_i X_i \) corresponds to the moment the last bulb burns out, and we want to estimate how close this is to the moment when the second‐to‐last bulb fails. Because exponential distributions are memoryless, once the second‐to‐last bulb fails, the remaining bulb’s remaining lifetime still follows \(\mathrm{Exp}(\beta)\). Therefore, the probability that the last bulb also fails within additional time \(c\) is \(1 - e^{-c\beta} \leq \beta c\). If the last bulb has not yet been turned on at the moment the second‐to‐last bulb fails, then the probability of a gap under \(c\) can only decrease. Consequently, the overall probability that the time difference between the last two bulb failures is less than \(c\) remains bounded by \(1 - e^{-c\beta}\).
\end{proof}

For vertex $v \in V$, let $d_i$ be the distance from $v$ to the $i$-th vertex in the graph $G$. Let $\delta^i$ be the random value associated with this vertex in \Cref{alg:randomShift}. 
Then, the above result implies via the triangle inequality that $B(v, \dist(u,v))$ is contained in cluster $\mathcal{C}(v)$ with probability at least $\dist(u,v)/(2D)$. 

\paragraph{Refining hierarchical random-shift decompositions (Proof of \Cref{rmk:nestingRSDecomp}).} We recall that $\mathcal{C}_{\geq l}$ is the coarsest common refinement of $\mathcal{C}_l, \mathcal{C}_{l+1}, \ldots, \mathcal{C}_L$ and for $v \in C \in \mathcal{C}_{\geq l}$, $\textsc{Center}_{\geq l}(v) = \argmin_{u \in C} \dist(r, \textsc{Center}_l(u))$.   For the first property, it suffices to use that $C \subseteq C' \in \mathcal{C}_l$ which implies that all vertices $w \in C$ have the same center $\textsc{Center}_l(v)$, and the triangle inequality to conclude that
\begin{align}\label{eq:triangleInequalityForCenterProj}
\begin{split}
\dist(v, \textsc{Center}_{\geq l}(v)) &\leq \min_{u \in C} \dist(\textsc{Center}_l(v), u)
+ \dist(\textsc{Center}_l(v), v) 
\\
&\leq 2 \cdot \dist(\textsc{Center}_l(v), v).
\end{split}
\end{align}

For the second property, note that vertices $u,v \in V$ are separated only in $\mathcal{C}_{\geq l}$ if they are separated in any of the clusters $\mathcal{C}_l, \mathcal{C}_{l+1}, \ldots, \mathcal{C}_L$ and thus with probability at most $\sum_{i = l}^L O(\dist(u,v) / 2^i) = O(\dist(u,v)/ 2^l)$.
\section{Random-Shift Decompositions via Approximate SSSP}
\label{sec:analysisApprox}

In this section, we show that a tweaked version of \Cref{alg:randomShift} that only requires calls to approximate SSSP yields the guarantees described in \Cref{thm:randomShiftExactMain}. Our algorithm is almost exactly the algorithm from \cite{blurryballgrowing-becker2019}, except that we save one level of recursion (this will suffice for us; for the application in \cite{blurryballgrowing-becker2019} this would not yield the desired guarantees). We choose $\epsilon = \frac{1}{40 \log n}$ throughout this section.

\begin{algorithm}[H]
\caption{\textsc{ApproxRandomShiftDecomposition}($G = (V, E, w : E \mapsto [1, W], D)$)}
\label{alg:approxRandomShift}
\For{each vertex $u \in V$}{
     Pick $\delta^u$ independently from the exponential distribution with mean $D$.
    }
    Let $G^s$ be the graph obtained from adding dummy source $s$ to $G$ and an edge $(s,v)$ of weight $\max_{w \in V} \delta^w - \delta^v$.\\
    Call a {\color{blue} $(1+\epsilon)$-approximate SSSP} procedure on $G^s$ from $s$ and let $T^s$ be the {\color{blue} approximate} shortest path tree obtained rooted at $s$.
    \label{lne:apxSSSPtree}
    \\
    \For{each vertex $u \in V$ that is a child of $s$ in $T^s$}{
         $\widetilde{C}^u$ is the set of all vertices in the subtree of $T^s$ rooted at $u$.\label{lne:approxRSCluster}\\
         \lForEach(\label{lne:partitionapproxRSforeachAssignCenter}){$v \in \widetilde{C}^u$}{$\widetilde{\textsc{Center}}(v) \gets u$.}
         
         {\color{blue}$\widehat{C}^u \gets \widetilde{C}^u \setminus \textsc{Blur}(G, V \setminus \widetilde{C}^u, D)$. \label{lne:callBlur}}\\         \lForEach(\label{lne:approxRSforeachAssignCenter}){$v \in \widehat{C}^u$}{$\widehat{\textsc{Center}}(v) \gets u$.}
        }
    \Return $(\widehat{\mathcal{C}} = \bigcup_{u \in V} \{\widehat{C}^u \}\setminus \{\emptyset\}, \widehat{\textsc{Center}})$ 
\end{algorithm}

\begin{algorithm}
\caption{\textsc{Blur}$(G, X, D)$
}
\label{alg:blur}
$\widehat{X} \gets X$.\\
\For{$i = 0,1, \ldots, \lceil \log_{1/\epsilon}(D) \rceil$}{
    Let $G^i$ be the graph obtained from $G$ by contracting $\widehat{X}$ into a super-vertex.\\
    Let $r^i$ be sampled uniformly from $[0, \epsilon^{i}D/64]$.\\
    Call a {$(1+\epsilon^2)$-approximate SSSP} procedure on $G^i$ from $\widehat{X}$ and let $T^i$ be the {approximate} shortest path tree obtained rooted at $\widehat{X}$.\\
    Add to $\widehat{X}$ all new vertices in $B_{T^i}(\widehat{X}, r^i)$
}
\Return $\widehat{X}$.
\end{algorithm}

\paragraph{Preliminaries for the analysis.} We commonly refer to $\mathcal{C}$ and $\textsc{Center}$ as the objects obtained from running  \Cref{alg:randomShift} for the same parameter and random choices.

We further observe that when all values $\delta^u \leq D \cdot 9\log n$, then all distances in $G_s$ are at most $D \cdot 10 \log n$ and thus by choice of $\epsilon$ at most an additive error of $D/4$ in any computed distance (and also $T_s$ introduces at most additive slack $D/4$ in the distances). Since all of these events hold w.p. $1 - n^{-8}$, we condition on it for the rest of this section.

\paragraph{Each vertex $v$ is clustered with constant probability.} We first prove the following statement that trivially implies that every vertex is clustered with at least constant probability.

\begin{lemma}\label{lma:ballInCluster}
For every vertex $v \in V$, \Cref{alg:approxRandomShift} invoked with parameter $D$ returns $\widehat{\mathcal{C}}$ such that $B(v, D/8)$ is contained in some cluster in $\widehat{\mathcal{C}}$ with probability at least $1/2$.
\end{lemma}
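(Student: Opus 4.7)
The plan is to decompose the randomness in \Cref{alg:approxRandomShift} into two independent pieces—the exponential delays $\{\delta^u\}$ that drive the approximate SPT, and the blur radii $\{r^i\}$ sampled inside \textsc{Blur}—and to show that a single large-gap event on the delays already forces the desired conclusion deterministically for \textsc{Blur}.

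First, I would apply \Cref{lem:gap1stvs2nddist} to the shifts $X_w^{(v)} := \dist_G(v, w) - \delta^w$. The memoryless light-bulb argument in its proof sketch in fact delivers the sharper exponential bound $\Pr[X^{(2)} - X^{(1)} \le c] \le 1 - e^{-c/D}$, not only the linear bound $c/D$ stated. I would pick a threshold $\tau$ that is a constant fraction of $D$ but strictly smaller than $(\ln 2) D$, so that the gap event $\mathcal{G} := \{X^{(2)} - X^{(1)} \ge \tau\}$ has probability at least $e^{-\tau/D} > 1/2$, and let $w^* := \argmin_w X_w^{(v)}$.

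Next, I would condition on $\mathcal{G}$ together with the already-standing preliminaries event $\delta^u \le 9 D \log n$. For any $u \in B(v, D/8)$ the triangle inequality gives $|X_w^{(u)} - X_w^{(v)}| \le D/8$ for every $w$. Because $\epsilon = 1/(40 \log n)$ and every $s$-to-$u$ distance in $G^s$ is bounded by $\delta^{\max} \le 9 D \log n$, the $(1{+}\epsilon)$-approximate SPT of line~\ref{lne:apxSSSPtree} introduces at most an additive slack of $9 D/40$; the child $w$ of $s$ on $u$'s tree path must then satisfy $X_w^{(u)} \le X^{(1)}(u) + 9 D/40$, writing $X^{(1)}(u) := \min_{w'} X_{w'}^{(u)}$. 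Chaining these inequalities with $\tau$ just above $81 D/160$ forces $B(v, D/8) \subseteq \widetilde{C}^{w^*}$, and analogously it forces
\[
\dist_G\!\bigl(B(v, D/8),\, V \setminus \widetilde{C}^{w^*}\bigr) \;\ge\; \tfrac{\tau - 9 D/40}{2} - \tfrac{D}{8} \;>\; \tfrac{D}{64}.
\]

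Finally, I would argue that each iteration $i$ of \textsc{Blur} enlarges $\widehat{X}$ in the true graph metric by at most $r^i$: the tree distance in the approximate SPT $T^i$ upper-bounds the graph distance in the contracted graph $G^i$, which in turn coincides with $\dist_G(\cdot, \widehat{X}_{i-1})$. Chaining across iterations gives a deterministic bound $\dist_G(u, V \setminus \widetilde{C}^{w^*}) \le \sum_{i \ge 0} \epsilon^i D/64 \le D/64 + o(D)$ for any $u$ ever added to $\widehat{X}$. Under $\mathcal{G}$ this is strictly below the margin above, so no vertex of $B(v, D/8)$ ever joins $\widehat{X}$, and hence $B(v, D/8) \subseteq \widehat{C}^{w^*}$. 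Independence of $\{\delta^u\}$ and $\{r^i\}$ then yields $\Pr[B(v, D/8) \subseteq \widehat{C}^{w^*}] \ge \Pr[\mathcal{G}] > 1/2$ (with the conditioning on $\delta^u \le 9 D \log n$ absorbed in a $1 - n^{-8}$ additive loss). The main obstacle is the tight constant balance: $\tau$ must simultaneously exceed the SPT slack $9D/40$ and the deterministic blur budget $\approx D/64$ while remaining below $(\ln 2) D$; that $81/160 < \ln 2$ leaves just enough room, which is precisely why the exponential form of \Cref{lem:gap1stvs2nddist} (rather than its linear statement) is essential.
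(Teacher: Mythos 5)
Your proof is correct and takes essentially the same route as the paper's: a large-gap event from \Cref{lem:gap1stvs2nddist} applied to the shifted distances $\dist_G(v,w)-\delta^w$, absorption of the additive $(1+\epsilon)$-SSSP slack into that gap, and the deterministic bound on the total blur radius, so that $B(v,D/8)$ survives into $\widehat{\mathcal{C}}$. The only difference is constant bookkeeping: your slack accounting pushes the gap threshold slightly above $D/2$, which is why you invoke the exponential form $1-e^{-c/D}$ from the proof sketch of \Cref{lem:gap1stvs2nddist}, whereas the paper simply takes $c=D/2$ with the stated linear bound; choosing $\tau$ with a bit more headroom (e.g.\ $\tau=0.55D<(\ln 2)D$) turns your final margin-versus-blur comparison from razor-thin into comfortable.
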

\begin{proof}
Recall that $\mathcal{C}$ is the clustering obtained from running \Cref{alg:randomShift} with the same parameter and random choices. 
\Cref{lem:gap1stvs2nddist} tells us that with 
probability $\geq 1/2$, the path $s \to \textsc{Center}(v) \leadsto v $ is $D/2$ shorter than any other path $s \to w \leadsto v$, and thus $B(v, D/4) \in \mathcal{C}(v)$ as well.
But note that since there is at most an additive error $D/16$ in the distances, this event implies that $B(v, D/8 + D/16) \subseteq \widetilde{C}(v)$. Finally, we note that \Cref{alg:blur} only removes vertices $v \in \widetilde{\mathcal{C}}(v)$ with $\dist(v, V \setminus \widetilde{\mathcal{C}}(v)) \leq (1+\epsilon) \sum_{i=0}^{\lceil\log_{1/\epsilon}(D)\rceil} r^i \leq 2 \sum_{i=0}^{\infty} \epsilon^i D/64 < D/16$. Thus, with probability at least $1/2$, $v$ is not removed in this process, as desired.
\end{proof}

\paragraph{Correctness of the first property.}
We follow closely the proof from
\Cref{sec:analysisExact}: we use the same ball growing process, i.e. the ball w.r.t. the real graph distances. For fixed vertex $v \in V$, this yields the sequence $\{r_l\}_{0 \leq l \leq L}$. \Cref{lem:RSLDD:sum-over-r-is-logarithmic} remains true since it only used values $r_l$.

We then show that for every level $0 < l < L$, if $v$ is clustered by $\mathcal{C}_l$, then $\frac{\dist(v, \widehat{\textsc{Center}}_{l}(v))}{2^l}$ is stochastically dominated by $O(r_l + X_l)$ for some random variable $X_l \sim \textsc{Exp}(1)$ (that only depends on $\mathcal{C}_l$). Let us now fix any such level $0 < l < L$. 
For every $i \geq 0$, we let $B_{i} = B(v,(r_{l} + i) 2^l)$ and define $I_{i} = B_{i+2} \setminus B_{i+1}$. We can now again rewrite the expectation as
\begin{align*}
    \E[\dist(v, \widetilde{\textsc{Center}}_l(v)] &= \sum_{w \in V} \P[v \in \widetilde{C}^w] \cdot \dist(w,v)
    \leq O(r_l \cdot 2^l) + \sum_{i \geq 1} \sum_{w \in I_i} \P[v \in \tilde{C}^w] \cdot (i+1) \cdot 2^l.
\end{align*}
Finally, we have to slightly adapt our calculation of an upper bound on $\P[v \in \tilde{C}^w]$. Namely, since there is an additive error of up to $2^l$ in the distance computation, $w \in I_i$ cannot become a center of $v$ if for some $u \in B_0, \delta^u + i \cdot 2^l > \delta^w + 2^l$, i.e. we weaken the inequality in the event by an additive $2^l$ compared to the proof in \Cref{sec:analysisExact}. But this still yields that 
\begin{align}\label{eq:approxMainAnalysisIntegration}\begin{split}
    \P[v \in \tilde{C}^w] &
    \leq \P[\max_{u \in B_0} \delta^u < \delta^w - (i-1) \cdot 2^l]\\
    &= \int_{x = (i-1) \cdot 2^l}^{\infty} \frac{1}{2^l} e^{-x/2^l} \cdot \P[\max_{u \in B_0} \delta^u + (i-1) \cdot 2^l < x] dx\\
    &< \frac{e^{-i+1}}{|B_0|}.
\end{split}\end{align}
This matches the bound of \Cref{eq:mainAnalysisIntegration} up to a factor $e$,
allowing us to recover the bound of \Cref{eq:concludeDistToCenter}. The rest of the proof is analogous, yielding that $\E[\dist(v, \widetilde{\textsc{Center}}_l(v)]/2^l$ is stochastically dominated by $O(r_l + X_l)$ for some $X_l \sim \textsc{Exp}(1)$. 

Finally, $\E[\dist(v, \widehat{\textsc{Center}}_l(v)]/2^l$ is stochastically dominated by $O(r_l + X_l)$, since 
$\widehat{C}^u \subseteq \widetilde{C}^u$ and we do not define the distance for vertices with no assigned center.


\paragraph{The first property for refining partitions.}
Let us denote the event that $v$ is not in a cluster in $\mathcal{C}_j$ by $\mathcal{E}_j$. It remains to observe that $\widehat{\textsc{Center}}_{\geq l}(v)$ is found by projecting the center $\widehat{\textsc{Center}}_{l'}(v)$ for the smallest level $l'\geq l$ where the event $\mathcal{E}_{l'}$ holds. As in \eqref{eq:triangleInequalityForCenterProj} and our previous reasoning, we have by the triangle inequality, that $\dist(v, \widehat{\textsc{Center}}_{\geq l}(v) \leq 2 \cdot \dist(v, \widehat{\textsc{Center}}_{l'}(v))$. We conclude
\begin{align*}
\E&\left[\sum_{0 < l < L} \frac{\dist(v, \widehat{\textsc{Center}}_{\geq l}(v))}{2^l}\right] \\
&\leq \sum_{0 < l < L} \sum_{0 \leq i \leq L - l} \P[\cap_{l \leq j < l+i} \mathcal{E}_j \land \neg \mathcal{E}_{l+i}] \cdot 2 \cdot \E[\dist(v, \widehat{\textsc{Center}}_{l+i}(v)) / 2^{l+i} \mid \neg \mathcal{E}_{l+i}] \\
&\leq \sum_{0 < l < L} \sum_{0 \leq i \leq L - l} \prod_{l \leq j < l+i}\P[ \mathcal{E}_j ] \cdot 2 \cdot \E[\dist(v, \widehat{\textsc{Center}}_{l+i}(v)) / 2^{l+i} \mid \neg \mathcal{E}_{l+i}] \\
    &\leq \sum_{0 < l < L} \sum_{0 \leq i \leq L - l} 2^{-i} \cdot 2 \cdot \E[\dist(v, \widehat{\textsc{Center}}_{l+i}(v)) / 2^{l+i} \mid \neg \mathcal{E}_{l+i}] \\
&< \sum_{0 < l < L} \sum_{0 \leq i \leq \infty} 2^{-i} \cdot 2 \cdot \E[\dist(v, \widehat{\textsc{Center}}_{l}(v))/ 2^l \mid \neg \mathcal{E}_{l}] \\
&= O(1) \cdot \E\left[\sum_{0 \leq l \leq L, {\widehat{\mathcal{C}}_l(v) \neq \emptyset}} \dist(v, \widehat{\textsc{Center}}_{l}(v))/2^l \right]\\
&=O(\log n).    
\end{align*}
Where we use that the events $\mathcal{E}_j$ are independent as levels are independent from each other, and so are events $\mathcal{E}_j$ from distances-to-centers for level $\neq j$. We then use that $\mathcal{E}_j$ occurs with probability at most $1/2$ by \Cref{lma:ballInCluster}, observe a geometric sum, and appeal to the bound on the distances to centers in $\widehat{C}_l$ from the last paragraph.

\paragraph{Correctness of the second property.} We fix an arbitrary level $0 < l < L$, and vertices 
$v,w \in V$,
and we want to show 
$\mathbb{P}[ \widehat{\mathcal{C}}_l(w) \neq \widehat{\mathcal{C}}_l(v)] = O\left(\frac{\dist_G(w,v)}{2^l}\right)$. We prove this property by a simple case distinction. We use $\Delta = \dist_G(v,w)$.
\begin{itemize}
    \item \underline{if $\widetilde{C}(w) \neq \widetilde{C}(v)$:} Consider the call to  \Cref{alg:blur} in the for-loop iteration for cluster $\widetilde{C}^u$ with $w \in \widetilde{C}^u$. The sub-procedure initializes $\hat{X} \gets V \setminus \widetilde{C}(w)$. Since by assumption $v \not\in \widetilde{C}^u$, initially $\dist_G(\hat{X}, u) \leq \dist_G(u,w)$. It chooses $r^0 \geq 2\Delta$ with probability $1 - \frac{128 \cdot \Delta}{2^l}$. Since it adds $B(\hat{X}, r^0/2) \subseteq B_{T^0}(\hat{X}, r^0)$ to $\hat{X}$ (we use $\frac{1}{1+\epsilon} \leq 2$), the above event results in $v,w$ being both added to $\hat{X}$. 
    
    Since $\hat{X}$ is only increasing in later steps, and $\widehat{C}^w = \widetilde{C}^w \setminus \hat{X}$ for the final set $\hat{X}$, we have that $w$ is not clustered with probability $1 - \Omega(\Delta/2^l)$. The same argument applies to $v$ by symmetry, and thus, the probability that $v,w$ are both not clustered is $1-\Omega(\Delta/2^l)$. Thus, the $u$ and $v$ are only clustered and separated with probability $O(\Delta/2^l)$. 

    \item \underline{otherwise (if $\widetilde{C}(w) = \widetilde{C}(v)$):} in this case, the property follows immediately from the following lemma due to \cite{blurryballgrowing-becker2019}. We point out that the precise statement in \cite{blurryballgrowing-becker2019} only bounds the probability of endpoints of edges being separated, however, the statement below is implicit.

    \begin{lemma}[see Lemma 3.1 in \cite{blurryballgrowing-becker2019}]
    For $n \geq 2$, \Cref{alg:blur} with parameters $G, X, D$ returns a set $\hat{X} \supset X$ such that any vertices $u,v \in V$ are separated by $\hat{X}$ (i.e. one is contained, one is not), with probability $O(\dist(u,v)/D)$.  
    \end{lemma}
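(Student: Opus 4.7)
The plan is to apply a union bound over the iterations of \Cref{alg:blur}. For each iteration $i$, let $B_i$ denote the event that exactly one of $u, v$ is newly added to $\hat X$ during iteration $i$, and let $\hat X^{(i)}$ denote the set at the start of iteration $i$. Since separation at termination implies some $B_i$ occurs, it suffices to show $\sum_i \Pr[B_i] = O(\dist(u,v)/D)$.

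For the per-iteration bound, I condition on both $u, v \notin \hat X^{(i)}$ and let $d_u \le d_v$ be their distances from $\hat X^{(i)}$ in the contracted graph $G^i$. The triangle inequality in $G$ gives $d_v - d_u \le \dist_G(u,v)$. Since $T^i$ is a $(1+\epsilon^2)$-approximate SSSP tree, $B_i$ requires $r^i$ to fall in a ``split window'' $[\tilde d_u, \tilde d_v)$ of length at most $\dist(u,v) + O(\epsilon^2 d_v)$, inside the sampling range $[0, \epsilon^i D/64]$. Since $r^i$ is uniform on this range and $B_i$ can only occur when $d_v \lesssim \epsilon^i D/64$, one obtains
\[\Pr[B_i \mid u, v \notin \hat X^{(i)}] \le O\!\left(\frac{\dist(u,v)}{\epsilon^i D}\right) + O(\epsilon^2).\]

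The principal obstacle is summing these bounds to $O(\dist(u,v)/D)$ despite the $1/\epsilon^i$ geometric growth. The resolution exploits the fact that for iteration $i$ to contribute, both $u$ and $v$ must survive all previous rounds of ball growing: if any prior $r^j$ is too large, $u$ would already have been added, whereas if the $r^j$'s are too small, the current sampling range $\epsilon^i D/64$ cannot reach $u$. Formally, I would identify the first iteration $i^\ast$ at which $d_u^{i^\ast} \le \epsilon^{i^\ast} D/64$; for $i < i^\ast$ the split window lies outside the support of $r^i$, so $\Pr[B_i] = 0$. For $i \ge i^\ast$, I would track the rescaled distance $\alpha_j = d_u^j/(\epsilon^j D/64)$, which conditional on non-addition obeys the recurrence $\alpha_{j+1} = \alpha_j W_j/\epsilon$ with $W_j$ uniform on $[0,1]$, so $\alpha_j$ rapidly exits the range $[0,1]$ unless $u$ is added. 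A careful probabilistic analysis of this Markov chain shows that the contributions of the active iterations telescope to $O(\dist(u,v)/D)$ in total; the clean intuition is that each ``unit of ball growth'' in the algorithm contributes probability $O(\dist(u,v)/D^2)$ to a split, and the total growth $\sum_i r^i$ is $O(D)$.

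Finally, the $O(\epsilon^2)$ per-iteration error contributes at most $O(\epsilon^2)$ when summed over the $O(1)$ expected active iterations; for $\epsilon = 1/(40 \log n)$ this is negligible and absorbed into the main bound, completing the proof.
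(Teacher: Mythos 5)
First, note that the paper does not prove this lemma at all: it is imported from Lemma~3.1 of \cite{blurryballgrowing-becker2019} (with the remark that the statement for arbitrary vertex pairs is implicit there), so your attempt has to stand on its own — and it has a genuine gap. The flaw is the very first reduction: bounding the separation probability by $\sum_i \Pr[B_i]$, where $B_i$ is the event that exactly one of $u,v$ is newly added in iteration $i$, discards precisely the mechanism that makes the lemma true, namely that once one endpoint joins $\widehat{X}$ the other is within $\dist(u,v)$ of $\widehat{X}$ and is almost surely swallowed in a \emph{later} iteration. Without this, the target inequality $\sum_i \Pr[B_i]=O(\dist(u,v)/D)$ is simply false: in iteration $0$ the ``split window'' has width up to $\dist(u,v)+\epsilon^2\cdot\Theta(D)$, because the $(1+\epsilon^2)$-approximation error scales with the distance to $\widehat{X}$, which can be $\Theta(D)$. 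So an adversarial (but legal) approximate SSSP gives $\Pr[B_0]=\Theta(\epsilon^2)=\Theta(1/\log^2 n)$, which dwarfs $\dist(u,v)/D$ whenever $D\gg \dist(u,v)\log^2 n$ (recall $D=2^l$ can be $\mathrm{poly}(n)$ while $\dist(u,v)$ can be $O(1)$). Your closing claim that the $O(\epsilon^2)$ terms are ``negligible and absorbed into the main bound'' is exactly where this surfaces: $\epsilon^2$ is not $O(\dist(u,v)/D)$, and no choice of constants fixes this within your framework. A correct proof must multiply the probability of a split at iteration $i$ by the probability that the left-behind endpoint is \emph{never} caught up in iterations $i+1,\dots$, which is what kills the additive approximation-error terms.

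There is a second, quantitative gap in the ``main term.'' With the survival bookkeeping you sketch, being eligible for a split at iteration $i$ costs roughly a factor $\epsilon$ per earlier iteration, so $\Pr[B_i]\lesssim \epsilon^i\cdot\bigl(\dist(u,v)/(\epsilon^i D)\bigr)=\Theta(\dist(u,v)/D)$ \emph{per iteration}; summing over the $\Theta(\log_{1/\epsilon} D)=\Theta(\log n/\log\log n)$ iterations gives $O\bigl((\dist(u,v)/D)\log_{1/\epsilon}D\bigr)$, not $O(\dist(u,v)/D)$. The asserted ``telescoping'' is not substantiated, and the Markov-chain recurrence you invoke is not valid as stated: one only has $d_u^{j+1}\ge d_u^j-r^j$ (the boundary need not advance toward $u$ by exactly $r^j$), so $\alpha_{j+1}$ is not $\alpha_j W_j/\epsilon$ with $W_j$ uniform. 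Both issues are repaired by the same missing idea as above: condition on the iteration $i$ where the pair is first split, and then bound the probability that the farther endpoint avoids all subsequent balls, using that its distance to $\widehat{X}$ is at most $\dist(u,v)$ from then on; this factor is $O(\dist(u,v)/(\epsilon^{i+1}D))$-type and restores both the missing geometric decay across iterations and the removal of the additive error terms.
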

\end{itemize}

\section{Tight probabilistic tree embeddings}
\label{sec:treeEmbeddings}

\paragraph{Tree construction.} Let $\mathcal{R}_{\geq 0} = \{ \{v \} \mid v \in V\}, \mathcal{R}_1, \ldots, \mathcal{R}_{\geq L} = \{V\}$ be a \emph{refining hierarchical (approximate) random-shift decomposition} with the properties described in \Cref{rmk:nestingRSDecomp} as obtained in \Cref{sec:analysisExact} and \Cref{sec:analysisApprox}. For convenience, we extend the associated functions $\textsc{Center}_{\geq l}$ to yield for each cluster $C \in \mathcal{R}_{\geq l}$ the center $\textsc{Center}_{\geq l}(C)$ (since all vertices in $C$ agree on the same center, the image consists only of a single vertex). Now, consider the tree $T$ that has a node associated with each cluster $C \in \mathcal{R}_{\geq l}$, and for $l < L$, an edge from the node to the node associated with $C \subseteq C' \in \mathcal{R}_{\geq l+1}$ of weight $dist_G(\textsc{Center}_{\geq l}(C), \textsc{Center}_{\geq l+1}(C'))$. 

\paragraph{Correctness.} We next show that $T$ is an $O(\log n)$-approximate probabilistic tree embedding: fix any pair of vertices $u,v \in V$. Let $0 < l \leq L$ be the smallest integer, such that $u$ and $v$ end up in the same cluster $C$ in $\mathcal{R}_{\geq l}(v)$, i.e. $\mathcal{R}_{\geq l}(u) = \mathcal{R}_{\geq l}(v)$. Note that because the partitions are refining, $u$ and $v$ must also be in the same cluster at every higher level. 

Then, it is not hard to see that the distance from $u$ to $v$ consists of the weight of the paths from $u$ to the center $\textsc{Center}_{\geq l}(u) = \textsc{Center}_{\geq l}(v)$ and then to $v$ in $T$. It is thus not hard to see that we can write the expected distance from $u$ to $v$ in $T$ as
\begin{align*}
\E[\dist_T(u,v)] &= \sum_{0 \leq l < L} \P[\mathcal{R}_{\geq l}(u) \neq \mathcal{R}_{\geq l}(v)] \cdot \left(\sum_{z \in \{u,v\}}\dist_G(\textsc{Center}_{\geq l}(z), \textsc{Center}_{\geq l+1}(z)) \right)\\
&\leq 2 \cdot \sum_{0 \leq l < L} \P[\mathcal{R}_{\geq l}(u) \neq \mathcal{R}_{\geq l}(v)] \cdot \left(\sum_{z \in \{u,v\}}\dist_G(z, \textsc{Center}_{\geq l+1}(z)) \right)\\
&\leq 2 \cdot \sum_{0 \leq l < L} O(\dist_G(u,v)/2^l) \cdot \left(\sum_{z \in \{u,v\}}\dist_G(z, \textsc{Center}_{\geq l+1}(z)) \right)\\
&\leq O(\dist_G(u,v)) \cdot \sum_{0 \leq l < L} \left(\sum_{z \in \{u,v\}} \frac{\dist_G(z, \textsc{Center}_{\geq l+1}(z))}{2^l} \right)\\
&= O(\dist_G(u,v)) \cdot O(\log n).
\end{align*}
where we use in the first inequality that by the triangle inequality 
\[
\dist_G(\textsc{Center}_{\geq i}(z), \textsc{Center}_{\geq i+1}(z)) \leq \dist_G(\textsc{Center}_{\geq i}(z), z) + \dist_G(z, \textsc{Center}_{\geq i+1}(z))
\]
and that $\dist_G(\textsc{Center}_{\geq 0}(z), z) = 0$ because $\textsc{Center}_{\geq 0}(z) =z$; in the second inequality, we use the second property of  \Cref{rmk:nestingRSDecomp} (or rather \Cref{thm:randomShiftExactMain}); in the third, we re-arrange terms, and in the final inequality, we exchange sums and use the first property of  \Cref{rmk:nestingRSDecomp} for each $z \in \{u,v\}$ separately.

Note that in the construction of $T$, we used access to the exact distance between cluster centers. While these distances are readily available from the computations in \Cref{alg:randomShift}, when instead using the approximate \Cref{alg:approxRandomShift}, we only have access to the \emph{approximate} distances. However, approximate distances can only increase the stretch by a constant factor, so the asymptotic guarantee remains unchanged.

\paragraph{Implementation.} Our analysis above, and inspecting \Cref{alg:randomShift} straightforwardly allows us to conclude that in the sequential model of computation, only $O(\log n)$ SSSP computations on an $O(m)$ edge graph suffice and that the additional time to construct $T$ is at most $O(m \log n)$. Using the linear-time SSSP algorithm by Thorup \cite{sssp-thorup1999}, we thus obtain runtime $O(m \log n)$. This yields the claim for the sequential setting in \Cref{thm:unfiedAppraochTreeEmbeddings}.

In the parallel, distributed, and semi-streaming models, we refer the reader to \cite{blurryballgrowing-becker2019} for details on how to implement \Cref{alg:approxRandomShift} and tree $T$ efficiently. We merely point out, that since the publication of \cite{blurryballgrowing-becker2019}, the state-of-the-art to compute approximate SSSP in parallel has been significantly advanced (see \cite{li2020faster, asssp-improved-rozhoň2022}) which speeds-up their parallel implementation by exchanging the result from \cite{asssp-improved-rozhoň2022} with their parallel approximate SSSP implementation. \cite{asssp-improved-rozhoň2022} also de-randomized the previously fastest SSSP algorithm in the CONGEST model. This yields the remaining claims in \Cref{thm:unfiedAppraochTreeEmbeddings}.

\section{Tight $\ell_1$-Oblivious Routing} 
\label{sec:obliviousRouting}

Our description of the construction of the $\ell_1$-oblivious routing follows closely \cite{rozhovn2022deterministic} (we slightly modify their construction to be simpler and yield tight approximation). Our analysis, however, deviates significantly from \cite{rozhovn2022deterministic},
and is perhaps closer to the approach in \cite{zuzic2022universally}. In the construction below, we require an additional SSSP subroutine that returns distance estimates $\widehat{\dist}(s,t)$ for all $t \in V$ such that for every edge $(u,v) \in E$, $|\widehat{\dist}(s,u) -  \widehat{\dist}(s,v)|\leq 2 w(u,v)$. 

\paragraph{Construction of the Oblivious Routing.} For efficiency reasons, we compute the $\ell_1$-oblivious routing $\AA \in \mathbb{R}^{m \times n}$ as a product of two matrices $\BB \in \mathbb{R}^{m \times |\mathcal{P}|}$ and $\CC \in \mathbb{R}^{|\mathcal{P}| \times n}$ where $\mathcal{P}$ is a path collection. The idea is that in $\CC$, we can add an entire path from $\mathcal{P}$ to the column of any vertex $v$ in $\CC$ and thus require only $O(\log n)$ bits instead of $O(|P|\log n)$ bits. For $\BB$, we have to store for each path $P \in \mathcal{P}$ the edge entries explicitly and thus need $O(|P|\log n)$ bits, and thus a total of $O(\sum_{P \in \mathcal{P}} |P| \log n)$ bits. This makes it cheap to use paths $P \in \mathcal{P}$ multiple times across different columns of $\CC$, even if $P$ consists of many edges. 

While it is not possible to efficiently compute $\AA$ explicitly, this yields an efficient way to apply $\AA$, meaning that for any demand $\dd$, we can compute $\AA \dd$ efficiently by first computing $\gg = \CC \dd$ and then $\ff = \BB \gg$. Similarly, we can also compute products $\AA^T \vv$ efficiently. 

\paragraph{Oblivious Routing via (approximate) random-shift decompositions.} Let us next describe the approximate random-shift decomposition that we use in the construction of $\BB$ and $\CC$.  

Let $D = 36 \log(n)$. We compute for every $0 < l < L$, and $0 < d \leq D$ an (approximate) random-shift decomposition $\mathcal{C}_{l,d}$ computed from either \Cref{alg:randomShift} or \Cref{alg:approxRandomShift} invoked with parameter $2^l$. Note that we compute all of these decompositions independently. For each such (sub)partition $\mathcal{C}_{l,d}$, we let $\textsc{Center}_{l,d}$ denote the corresponding center function. For convenience, we define $\textsc{Center}_{0, d}(v) = v$ and $\textsc{Center}_{L, d}(v) = r$ for some arbitrary vertex $r \in V$ and every $v \in V$ and $0 < d \leq D$. In particular, every (approximate) shortest path from a vertex $v \in V$ to one of its centers can be obtained from the concatenation of $O(\log^2 n)$ paths.

\paragraph{Constructing the path collection and $\BB$.} For every (sub)partition $\mathcal{C}_{l,d}$, we let $F_{l,d}$ be the associated forest that was computed by \Cref{alg:randomShift} or \Cref{alg:approxRandomShift} and spans the partition sets. Applying a heavy-light decomposition (see \cite{sleator1981data}), we can decompose $F_{l,d}$ into a collection of edge-disjoint paths $\mathcal{P}_{l,d}$ such that every path in $F_{l,d}$ intersects with $O(\log n)$ such paths only. Finally, for every path $P \in \mathcal{P}_{l,d}$, we add the dyadic sub-segments of $P$ to the collection $\mathcal{P}$ that is for every $0 \leq i \leq \lceil\lg(n)\rceil$ and $j \in (0, \lceil|P|/2^i \rceil)$, we add the segment in $P$ from the $(j \cdot 2^i +1)$-th to the $(\min\{(j+1) \cdot 2^i, |P|\})$-th vertex to $\mathcal{P}$. 

\paragraph{Constructing the routing matrix $\CC$.}
To construct $\CC$, for every vertex $v \in V$, $0 \leq l \leq L$, we define for every cluster $C \in \mathcal{C}_{l,d}$ for $0 < d \leq D$, $p_{l,d, C}(v) = \min\{1, \widehat{\dist}(v, V \setminus C)/2^l\}$ (note that if $v \not\in C$, then $p_{l,d,C}(v) = 0$). We let $w_{l}(v) = \sum_{0 < d \leq D} \sum_{C \in \mathcal{C}_{l,d}} p_{l, d, C}(v)$.

We now construct $\CC$ such that the column for $v \in V$ routes $f_{l, d, d', C, C'}(v) = \frac{p_{l, d, C}(v)}{w_l(v)} \cdot \frac{p_{l+1, d', C'}(v)}{w_{l+1}(v)}$ units of flow along some  $\textsc{Center}_{l,d}(v)$ to $\textsc{Center}_{l+1,d'}(v)$ path, for every $d, d' \in [D]$ and $C \in \mathcal{C}_{l,d}, C' \in \mathcal{C}_{l+1,d'}$. We take the path between the centers to be the (approximate) shortest paths via some vertex $w \in V$ that also needs to route flow between them, i.e. we let \[
m_{l, d,d', C, C'} = \min_{w \in V, f_{l,d,d', C, C'}(w) \neq 0} \dist_{T_{l, d}}(\textsc{Center}_{l,d}(w), w) + \dist_{T_{l+1, d'}}(w, \textsc{Center}_{l+1, d'}(w))
\]
and let $P_{l,d,d', C, C'}$ be the corresponding $\textsc{Center}_{l,d}(v)$ to $\textsc{Center}_{l+1,d'}(v)$ path which can be obtained from the composition of $O(\log^2 n)$ paths in $\mathcal{P}$. For each such segment, we add $f_{l,d,d', C, C'}(v)$ to its entry. This completes the construction of routing matrix $\CC$ and thus of our oblivious routing matrix $\AA$.

\paragraph{Correctness.} We start by establishing that $\AA$ indeed is an oblivious routing, i.e. for every demand $\dd$, flow $\ff = \AA \dd$ routes the demand $\dd$. 

To this end, let us first analyze $\gg = \AA \vecone_v$ for some vertex $v \in V$. We claim that $\gg$ is a distribution over $vr$-paths (recall that $r$ is the arbitrary vertex that is chosen to be the center of all vertices at level $L$). This yields overall correctness since it implies that $\AA \dd(v)$ sends $\dd(u)$ units of flow away from $u$, conserves flow at other vertices, and sends $\dd(u)$ units of flow into $r$. Since $\sum_v \dd(v) = 0$ as $\dd$ is a demand, the positive and negative flows cancel the flow into $r$ (except for $-\dd(r)$) and the result follows.

To see that $\gg = \AA \vecone_v$ is a distribution over $vr$-paths, observe that for level $0 \leq l < L$, we have for every $0 < d \leq D$, $C \in \mathcal{C}_{l,d}$ that the out-flow of $\textsc{Center}_{l,d}(v)$ is 
\begin{align*}
\sum_{0 < d' \leq D, C' \in \mathcal{C}_{l+1, d'}} f_{l, d,d', C, C'}(v) &= \sum_{0 < d' \leq D, C' \in \mathcal{C}_{l+1, d'}} \frac{p_{l, d, C}(v)}{w_l(v)} \cdot \frac{p_{l+1, d', C'}(v)}{w_{l+1}(v)} \\
&= \frac{p_{l, d, C}(v)}{w_l(v)} \cdot \sum_{0 < d' \leq D, C' \in \mathcal{C}_{l+1, d'}} \frac{p_{l+1, d', C'}(v)}{w_{l+1}(v)} = \frac{p_{l, d, C}(v)}{w_l(v)}
\end{align*}
For $0 < l \leq L$, we further have from a similar calculation that the in-flow is $\frac{p_{l, d, C}(v)}{w_l(v)}$ and thus the flow is conserved at each center on level $0 < l < L$. This immediately yields the claim.

\paragraph{Tight Competitive Ratio.} Before we dive into the proof of the competitive ratio, let us first observe that for every vertex $v \in V$, level $0 < l < L$ and $0 < d \leq D$, and either by an argument via \Cref{lem:gap1stvs2nddist} or directly by \Cref{lma:ballInCluster}, we have that with probability at least $1/2$, $v$ is clustered and $B(v, 2^l/8) \subseteq \mathcal{C}_{l,d}(v)$. Thus, $\E[p_{v,l,d}(v)] \geq 1/8$. It follows immediately from a Chernoff bound that $D/16 \leq w_{l}(v) \leq D$ with probability at least $1-n^{-3}$. We condition on this event and the event that if $v$ is clustered by $\mathcal{C}_{l,d}$, then the distance to its center is bounded, i.e. $ \dist(v, \textsc{Center}_{l,d}(v)) \leq 2^l \cdot 10 \log n$, for all choices of $v \in V$, $0 < l < L$ and $0 < d \leq D$. By a union bound, these events occur with probability at least $1-O(n^{-2})$.

Let us now bound the competitive ratio of $\AA$. From matrix norms, we have that the competitive ratio is equivalent to $\max_{(u,v) \in E} \frac{\|\AA (\vecone_u- \vecone_v)\|_1}{\dist(u,v)}$ (see for example \cite{florescu2024optimal}). For the rest of the proof, let us fix $u, v \in V$ so that they achieve the maximum ratio among all vertex pairs. We observe that 
\begin{align}\label{eq:boundOblRouting}
\|\AA (\vecone_u- \vecone_v)\|_1 \leq \sum_{0 \leq l < L} \sum_{\substack{0 < d \leq D, C \in \mathcal{C}_{l,d},\\ 0 < d' \leq D , C' \in \mathcal{C}_{l+1, d'}}} |f_{l, d, d', C, C'}(v) - f_{l, d, d', C, C'}(u)| \cdot w(P_{l,d,d',C,C'})
\end{align}
since each term $|f_{l, d, d', C, C'}(v) - f_{l, d, d', C, C'}(u)|$ yields the amount of flow sent between center $C$ and $C'$ along path $P_{l,d,d',C,C'}$ (from the demand, $v$ uses the path in the forward direction, $u$ in the backwards direction) and we are summing over all combinations of clusters at consecutive levels, and levels. Note that the inequality is due only to further cancellations, i.e. an edge $e \in E$ might occur on two (or more) paths $P_{l,d,d',C,C'}$ and $P_{l',d'',d''',C'',C'''}$ and the flows along these paths goes into the opposite directions, thus yielding a cancellation on $e$.

Let us define for each $0 \leq l < L$, the variable 
\[
Y_l = \sum_{\substack{0 < d \leq D, C \in \mathcal{C}_{l,d},\\ 0 < d' \leq D , C' \in \mathcal{C}_{l+1, d'}}} |f_{l, d, d', C, C'}(v) - f_{l, d, d', C, C'}(u)| \cdot w(P_{l,d,d',C,C'}).
\]
The goal of this section is to argue about the sum $Y = \sum_{0 \leq l < L} Y_l$ and bound the sum by $O(\log n)$ w.h.p. The key to obtaining this bound is to find an upper bound on the expectation of $Y_l$, i.e., $\E[Y_l]$. This allows us to conclude the result by a simple Chernoff bound.

To avoid working with the absolute value when analyzing $Y_l$, we observe
\begin{align}
\begin{split}
&|f_{l, d, d', C, C'}(v) - f_{l, d, d', C, C'}(u)| \\&=
 \max\{0, f_{l, d, d', C, C'}(v) - f_{l, d, d', C, C'}(u)\}+  \max\{0, f_{l, d, d', C, C'}(u) - f_{l, d, d', C, C'}(v)\} 
\end{split}
\end{align}

We next analyze the non-canceled flow sent between a specific pair of level $l$ and $l+1$ centers more carefully. For any $0 < d \leq D, C \in \mathcal{C}_{l,d}, 0 < d' \leq D , C' \in \mathcal{C}_{l+1, d'}$
\begin{align}\label{eq:uncanceledFlow}
\begin{split}
    f_{l, d, d', C, C'}(v) - f_{l, d, d', C, C'}(u)
    &= \frac{p_{l, d, C}(v)}{w_{l}(v)} \cdot \frac{p_{l+1, d', C'}(v)}{w_{l+1}(v)} - \frac{p_{l, d, C}(u)}{w_{l}(u)} \cdot \frac{p_{l+1, d', C'}(u)}{w_{l+1}(u)} \\
    &= O\left(\frac{p_{l, d, C}(v) p_{l+1, d', C'}(v) - p_{l, d, C}(u) p_{l+1, d', C'}(u)}{D^2}\right).
\end{split}
\end{align}
The key observation now is that $\dist(u, V \setminus C) \leq \dist(v, V \setminus C) - \Delta$ for $\Delta = d(u,v)$ by the triangle inequality. This yields
\begin{align*}
\dist(v, V \setminus C) \cdot \dist(v, V \setminus C') - &\dist(u, V \setminus C) \cdot \dist(u, V \setminus C') \\&\leq 
\dist(v, V \setminus C) \cdot \dist(v, V \setminus C') - (\dist(v, V \setminus C)- \Delta) \cdot (\dist(v, V \setminus C')- \Delta)\\
&= \Delta (\dist(v, V \setminus C) + \dist(v, V \setminus C')) - \Delta^2\\
&<  \Delta (\dist(v, V \setminus C) + \dist(v, V \setminus C').
\end{align*}
Note that by definition of $p_{l, d, C}(v)$ (and our assumption on $\widehat{\dist}$), this implies 
\[
p_{l, d, C}(v) p_{l+1, d', C'}(v) - p_{l, d, C}(u) p_{l+1, d', C'}(u) \leq \frac{2\Delta}{2^l} \cdot (p_{l, d, C}(v) + p_{l+1, d', C'}(v))
\]
since capping the values $p_{l, d, C}(v), p_{l+1, d', C'}(v), p_{l, d, C}(u), p_{l+1, d', C'}(u)$ at $1$ can only tighten the inequality. We conclude that 
\begin{align}\label{eq:deterministicUpperBoundOblRouting}
\begin{split}
&\sum_{\substack{0 < d \leq D, C \in \mathcal{C}_{l,d},\\ 0 < d' \leq D , C' \in \mathcal{C}_{l+1, d'}}} |f_{l, d, d', C, C'}(v) - f_{l, d, d', C, C'}(u)| \\
&= O\left( \frac{\Delta}{2^l D^2} \cdot \sum_{\substack{0 < d \leq D, C \in \mathcal{C}_{l,d},\\ 0 < d' \leq D , C' \in \mathcal{C}_{l+1, d'}}} \sum_{z \in \{u,v\}} (p_{l, d, C}(z) + p_{l+1, d', C'}(z))  \right).\\
&= O\left(\frac{\Delta}{2^l D^2} \cdot D \cdot (w_l(v) + w_{l+1}(v) + w_l(u) + w_{l+1}(u))\right)\\
&=O(\Delta/2^l)
\end{split}
\end{align}
Returning to our analysis of $Y_l$, we obtain that for $\Gamma_{l,d}(z) = \dist(z, \textsc{Center}_{l,d}(z))$, we can conclude that $\E[Y_l]$ is bound from above by
\begin{align}
\begin{split}
&\E\left[\sum_{\substack{0 < d \leq D, C \in \mathcal{C}_{l,d},\\ 0 < d' \leq D , C' \in \mathcal{C}_{l+1, d'}}} |f_{l, d, d', C, C'}(u) - f_{l, d, d', C, C'}(v)| \cdot \left(\Gamma_{l,d}(z) + \Gamma_{l+1,d'}(z)\right)\right] \\
&= O(\Delta) \cdot (r_l(v) + r_{l+1}(v) + r_l(u) + r_{l+1}(u))
\end{split}
\end{align}
where we let $r_l(v), r_{l+1}(v), r_l(u), r_{l+1}(u)$ be as in the ball growing process described in \Cref{sec:analysisExact} for level $l$/ $l+1$ and vertex $u$/ $v$, respectively. 

We use next that the inequalities \eqref{eq:deterministicUpperBoundOblRouting} are deterministic and so each variable $Y_l$ has value at most 
\[O(\Delta) \cdot (r_l(v) + X_l(v) + r_{l+1}(v) + X_{l+1}(v) + r_l(u) + X_l(u) + r_{l+1}(u) + X_{l+1}(u))
\]
for $X_l(v), X_{l+1}(v), X_l(u), X_{l+1}(u) \sim \textsc{Exp}(1)$ by the fact that distances-to-centers are stochastically dominated by an $r_l$ term and an exponentially distributed variable $X_l \sim \textsc{Exp}(1)$ as explicitly pointed out both in \Cref{sec:analysisExact} and \Cref{sec:analysisApprox}. 

We have from the Chernoff bound for exponentially-distributed random variables that the sums $X(u) = \sum_{0 \leq l \leq L} X_l(u)$ and  $X(v) = \sum_{0 \leq l \leq L} X_l(v)$ are both upper bounded by $10 \log n$ with probability at least $1-O(n^{-4})$. Conditioning on this event finally yields
\begin{align*}
\sum_{0 \leq l < L} Y_l &= O\left(\sum_{0 \leq l < L} \Delta \cdot (r_l(v) + X_l(v) + r_{l+1}(v) + X_{l+1}(v) + r_l(u) + X_l(u) + r_{l+1}(u) + X_{l+1}(u)) \right)\\
&= O\left(\sum_{0 \leq l \leq L} \Delta \cdot (r_l(v) + X_l(v) + r_l(u) + X_l(u)) \right)\\
&= O\left(\Delta \cdot \left(\left(\sum_{0 \leq l \leq L} r_l(v)\right) + \left(\sum_{0 \leq l \leq L} r_l(u)\right) + \left(\sum_{0 \leq l \leq L} X_l(u)\right) 
 + \left(\sum_{0 \leq l \leq L} X_l(v)\right)\right)\right)\\
 = O(\Delta \log n)
\end{align*}
where in the last inequality, we use \Cref{lem:RSLDD:sum-over-r-is-logarithmic} and our previous conditioning. This concludes the proof.

\paragraph{Implementation.} For the implementation, we refer the reader to \Cref{sec:treeEmbeddings} for the implementation of (approximate) random-shift decomposition. The remaining details can be found in \cite{rozhovn2022deterministic}, where a fast SSSP routine is given that computes the distance function $\widehat{\dist}$ as required. We note that \cite{rozhovn2022deterministic} does not compute the quantities $m_{l,d,d',C, C'}$ since they only route between clusters where $C \subseteq C'$, and thus, the shortest path between centers is already pre-computed. However, each vertex $v \in V$ participates in at most $L \cdot D^2$ such minimization problems; thus, these quantities can be implemented efficiently in their minor-aggregation model of computation, which in turn is efficiently implementable in PRAM and CONGEST models. Analogously, paths $P_{l,d,d', C, C'}$ can be computed efficiently. \Cref{thm:oblviousRouting} follows.

We finally note that our proof can be turned into a verifier of whether $\AA$ is of good quality: we can simply check $\AA(\vecone_u - \vecone_v)$ for all edges $(u,v) \in E$. Using this approach, we can further convert our algorithm from a Monte-Carlo algorithm into a Las-Vegas algorithm.

\section*{Acknowledgment} 

We thank the FOCS reviewers for their careful and insightful feedback. We are especially grateful to Bernhard Haeupler and Arnold Filtser for clarifying the history of the problem and for pointing out the overlap with \cite{haeupler2016faster, czumaj2021exploiting}.

\newpage
\bibliographystyle{alpha}
\bibliography{refs}

\newcommand{\etalchar}[1]{$^{#1}$}
\begin{thebibliography}{DSHK{\etalchar{+}}11}

\bibitem[AA97]{buy-at-bulk-awerbuch1997}
Baruch Awerbuch and Yossi Azar.
\newblock Buy-at-bulk network design.
\newblock In {\em Proceedings 38th Annual Symposium on Foundations of Computer Science}, pages 542--547. IEEE, 1997.

\bibitem[ABN08]{abraham2008nearly}
Ittai Abraham, Yair Bartal, and Ofer Neiman.
\newblock Nearly tight low stretch spanning trees.
\newblock In {\em 2008 49th Annual IEEE Symposium on Foundations of Computer Science}, pages 781--790. IEEE, 2008.

\bibitem[ACB17]{optimal-transport-generative-machine-learning-arjovsky2017}
Martin Arjovsky, Soumith Chintala, and L{\'e}on Bottou.
\newblock Wasserstein generative adversarial networks.
\newblock In {\em International conference on machine learning}, pages 214--223. PMLR, 2017.

\bibitem[ACE{\etalchar{+}}20]{abraham2020ramsey}
Ittai Abraham, Shiri Chechik, Michael Elkin, Arnold Filtser, and Ofer Neiman.
\newblock Ramsey spanning trees and their applications.
\newblock {\em ACM Transactions on Algorithms (TALG)}, 16(2):1--21, 2020.

\bibitem[AKPW95]{previous-embeddings-alon1995}
Noga Alon, Richard~M Karp, David Peleg, and Douglas West.
\newblock A graph-theoretic game and its application to the k-server problem.
\newblock {\em SIAM Journal on Computing}, 24(1):78--100, 1995.

\bibitem[AN19]{abraham2019using}
Ittai Abraham and Ofer Neiman.
\newblock Using petal-decompositions to build a low stretch spanning tree.
\newblock {\em SIAM Journal on Computing}, 48(2):227--248, 2019.

\bibitem[ASZ20]{andoni2020parallel}
Alexandr Andoni, Clifford Stein, and Peilin Zhong.
\newblock Parallel approximate undirected shortest paths via low hop emulators.
\newblock In {\em Proceedings of the 52nd Annual ACM SIGACT Symposium on Theory of Computing}, pages 322--335, 2020.

\bibitem[Bar96]{previous-embeddings-bartal1996}
Yair Bartal.
\newblock Probabilistic approximation of metric spaces and its algorithmic applications.
\newblock In {\em Proceedings of 37th Conference on Foundations of Computer Science}, pages 184--193. IEEE, 1996.

\bibitem[Bar98]{lognloglogn-embedding-bartal1998}
Yair Bartal.
\newblock On approximating arbitrary metrices by tree metrics.
\newblock In {\em Proceedings of the thirtieth annual ACM symposium on Theory of computing}, pages 161--168, 1998.

\bibitem[Bar04]{bartal2004graph}
Yair Bartal.
\newblock Graph decomposition lemmas and their role in metric embedding methods.
\newblock In {\em European Symposium on Algorithms}, pages 89--97. Springer, 2004.

\bibitem[BEL20]{blurryballgrowing-becker2019}
Ruben Becker, Yuval Emek, and Christoph Lenzen.
\newblock Low diameter graph decompositions by approximate distance computation.
\newblock In {\em 11th Innovations in Theoretical Computer Science Conference (ITCS 2020)}, pages 50--1. Schloss Dagstuhl--Leibniz-Zentrum f{\"u}r Informatik, 2020.

\bibitem[BGS17]{blelloch2017efficient}
Guy~E Blelloch, Yan Gu, and Yihan Sun.
\newblock Efficient construction of probabilistic tree embeddings.
\newblock In {\em 44th International Colloquium on Automata, Languages, and Programming (ICALP 2017)}, pages 26--1. Schloss Dagstuhl--Leibniz-Zentrum f{\"u}r Informatik, 2017.

\bibitem[BGSS20]{blelloch2020parallelism}
Guy~E Blelloch, Yan Gu, Julian Shun, and Yihan Sun.
\newblock Parallelism in randomized incremental algorithms.
\newblock {\em Journal of the ACM (JACM)}, 67(5):1--27, 2020.

\bibitem[BGT12]{blelloch2012parallel}
Guy~E Blelloch, Anupam Gupta, and Kanat Tangwongsan.
\newblock Parallel probabilistic tree embeddings, k-median, and buy-at-bulk network design.
\newblock In {\em Proceedings of the twenty-fourth annual ACM symposium on Parallelism in algorithms and architectures}, pages 205--213, 2012.

\bibitem[BGWN20]{bernstein2020near}
Aaron Bernstein, Maximilian~Probst Gutenberg, and Christian Wulff-Nilsen.
\newblock Near-optimal decremental sssp in dense weighted digraphs.
\newblock In {\em 2020 IEEE 61st Annual Symposium on Foundations of Computer Science (FOCS)}, pages 1112--1122. IEEE, 2020.

\bibitem[BKE{\etalchar{+}}24]{distance-queries-social-networks-basu2024}
Sabyasachi Basu, Nadia K{\=o}shima, Talya Eden, Omri Ben-Eliezer, and C~Seshadhri.
\newblock A sublinear algorithm for approximate shortest paths in large networks.
\newblock {\em arXiv preprint arXiv:2406.08624}, 2024.

\bibitem[CD21]{czumaj2021exploiting}
Artur Czumaj and Peter Davies.
\newblock Exploiting spontaneous transmissions for broadcasting and leader election in radio networks.
\newblock {\em Journal of the ACM (JACM)}, 68(2):1--22, 2021.

\bibitem[Che14]{chechik2014approximate}
Shiri Chechik.
\newblock Approximate distance oracles with constant query time.
\newblock In {\em Proceedings of the forty-sixth annual ACM symposium on Theory of computing}, pages 654--663, 2014.

\bibitem[Che15]{chechik2015approximate}
Shiri Chechik.
\newblock Approximate distance oracles with improved bounds.
\newblock In {\em Proceedings of the forty-seventh annual ACM symposium on Theory of Computing}, pages 1--10, 2015.

\bibitem[CKNZ01]{metric-labeling-chekuri2001}
Chandra Chekuri, Sanjeev Khanna, Joseph Naor, and Leonid Zosin.
\newblock Approximation algorithms for the metric labeling problem via a new linear programming formulation.
\newblock In {\em Proceedings of the 12th Annual ACM-SIAM Symposium on Discrete Algorithms}, Proceedings of the Annual ACM-SIAM Symposium on Discrete Algorithms, pages 109--118, 2001.
\newblock 2001 Operating Section Proceedings, American Gas Association ; Conference date: 30-04-2001 Through 01-05-2001.

\bibitem[CTW24]{optimal-transport-fast-trees-chen2024}
Samantha Chen, Puoya Tabaghi, and Yusu Wang.
\newblock Learning ultrametric trees for optimal transport regression.
\newblock In {\em Proceedings of the AAAI Conference on Artificial Intelligence}, volume~38, pages 20657--20665, 2024.

\bibitem[CZ20]{chechik2020dynamic}
Shiri Chechik and Tianyi Zhang.
\newblock Dynamic low-stretch spanning trees in subpolynomial time.
\newblock In {\em Proceedings of the Fourteenth Annual ACM-SIAM Symposium on Discrete Algorithms}, pages 463--475. SIAM, 2020.

\bibitem[DGPW14]{distance-queries-social-networks-delling2014}
Daniel Delling, Andrew~V Goldberg, Thomas Pajor, and Renato~F Werneck.
\newblock Robust distance queries on massive networks.
\newblock In {\em Algorithms-ESA 2014: 22th Annual European Symposium, Wroclaw, Poland, September 8-10, 2014. Proceedings 21}, pages 321--333. Springer, 2014.

\bibitem[DSHK{\etalchar{+}}11]{das2011distributed}
Atish Das~Sarma, Stephan Holzer, Liah Kor, Amos Korman, Danupon Nanongkai, Gopal Pandurangan, David Peleg, and Roger Wattenhofer.
\newblock Distributed verification and hardness of distributed approximation.
\newblock In {\em Proceedings of the forty-third annual ACM symposium on Theory of computing}, pages 363--372, 2011.

\bibitem[EEST05]{elkin2005lower}
Michael Elkin, Yuval Emek, Daniel~A Spielman, and Shang-Hua Teng.
\newblock Lower-stretch spanning trees.
\newblock In {\em Proceedings of the thirty-seventh annual ACM symposium on Theory of computing}, pages 494--503, 2005.

\bibitem[EN18]{elkin2018efficient}
Michael Elkin and Ofer Neiman.
\newblock Efficient algorithms for constructing very sparse spanners and emulators.
\newblock {\em ACM Transactions on Algorithms (TALG)}, 15(1):1--29, 2018.

\bibitem[ER09]{englert2009oblivious}
Matthias Englert and Harald R{\"a}cke.
\newblock Oblivious routing for the lp-norm.
\newblock In {\em 2009 50th Annual IEEE Symposium on Foundations of Computer Science}, pages 32--40. IEEE, 2009.

\bibitem[FG19]{forster2019dynamic}
Sebastian Forster and Gramoz Goranci.
\newblock Dynamic low-stretch trees via dynamic low-diameter decompositions.
\newblock In {\em Proceedings of the 51st Annual ACM SIGACT Symposium on Theory of Computing}, pages 377--388, 2019.

\bibitem[FGdV22]{forster2022improved}
Sebastian Forster, Martin Gr{\"o}sbacher, and Tijn de~Vos.
\newblock An improved random shift algorithm for spanners and low diameter decompositions.
\newblock In {\em 25th International Conference on Principles of Distributed Systems (OPODIS 2021)}, pages 16--1. Schloss Dagstuhl--Leibniz-Zentrum f{\"u}r Informatik, 2022.

\bibitem[FGH21]{forster2021dynamic}
Sebastian Forster, Gramoz Goranci, and Monika Henzinger.
\newblock Dynamic maintenance of low-stretch probabilistic tree embeddings with applications.
\newblock In {\em Proceedings of the 2021 ACM-SIAM Symposium on Discrete Algorithms (SODA)}, pages 1226--1245. SIAM, 2021.

\bibitem[FKGS24]{florescu2024optimal}
Cella Florescu, Rasmus Kyng, Maximilian~Probst Gutenberg, and Sushant Sachdeva.
\newblock Optimal electrical oblivious routing on expanders.
\newblock In {\em 51st International Colloquium on Automata, Languages, and Programming (ICALP 2024)}, pages 65--1. Schloss Dagstuhl--Leibniz-Zentrum f{\"u}r Informatik, 2024.

\bibitem[FL18]{friedrichs2018parallel}
Stephan Friedrichs and Christoph Lenzen.
\newblock Parallel metric tree embedding based on an algebraic view on moore-bellman-ford.
\newblock {\em Journal of the ACM (JACM)}, 65(6):1--55, 2018.

\bibitem[Fox24]{fox2024simple}
Emily Fox.
\newblock A simple deterministic near-linear time approximation scheme for transshipment with arbitrary positive edge costs.
\newblock In {\em 32nd Annual European Symposium on Algorithms (ESA 2024)}, pages 56--1. Schloss Dagstuhl--Leibniz-Zentrum f{\"u}r Informatik, 2024.

\bibitem[FRT03]{frttrees2003}
Jittat Fakcharoenphol, Satish Rao, and Kunal Talwar.
\newblock A tight bound on approximating arbitrary metrics by tree metrics.
\newblock In {\em Proceedings of the Thirty-Fifth Annual ACM Symposium on Theory of Computing}, STOC '03, page 448–455, New York, NY, USA, 2003. Association for Computing Machinery.

\bibitem[GL14]{ghaffari2014near}
Mohsen Ghaffari and Christoph Lenzen.
\newblock Near-optimal distributed tree embedding.
\newblock In {\em International Symposium on Distributed Computing}, pages 197--211. Springer, 2014.

\bibitem[HW16]{haeupler2016faster}
Bernhard Haeupler and David Wajc.
\newblock A faster distributed radio broadcast primitive.
\newblock In {\em Proceedings of the 2016 ACM Symposium on Principles of Distributed Computing}, pages 361--370, 2016.

\bibitem[KKM{\etalchar{+}}08]{khan2008efficient}
Maleq Khan, Fabian Kuhn, Dahlia Malkhi, Gopal Pandurangan, and Kunal Talwar.
\newblock Efficient distributed approximation algorithms via probabilistic tree embeddings.
\newblock In {\em Proceedings of the twenty-seventh ACM symposium on Principles of distributed computing}, pages 263--272, 2008.

\bibitem[KT02]{metric-labeling-kleinberg2002}
Jon Kleinberg and Eva Tardos.
\newblock Approximation algorithms for classification problems with pairwise relationships: Metric labeling and markov random fields.
\newblock {\em Journal of the ACM (JACM)}, 49(5):616--639, 2002.

\bibitem[LCCS18]{optimal-transport-graphics-lavenant2018}
Hugo Lavenant, Sebastian Claici, Edward Chien, and Justin Solomon.
\newblock Dynamical optimal transport on discrete surfaces.
\newblock {\em ACM Transactions on Graphics (TOG)}, 37(6):1--16, 2018.

\bibitem[Li20]{li2020faster}
Jason Li.
\newblock Faster parallel algorithm for approximate shortest path.
\newblock In {\em Proceedings of the 52nd Annual ACM SIGACT Symposium on Theory of Computing}, pages 308--321, 2020.

\bibitem[LR18]{optimal-transport-machine-learning-lee2018}
Jaeho Lee and Maxim Raginsky.
\newblock Minimax statistical learning with wasserstein distances.
\newblock {\em Advances in Neural Information Processing Systems}, 31, 2018.

\bibitem[MN07]{mendel2007ramsey}
Manor Mendel and Assaf Naor.
\newblock Ramsey partitions and proximity data structures.
\newblock {\em Journal of the European Mathematical Society}, 9(2):253--275, 2007.

\bibitem[MPX13]{randomshift2013}
Gary~L. Miller, Richard Peng, and Shen~Chen Xu.
\newblock Parallel graph decompositions using random shifts.
\newblock In {\em Proceedings of the Twenty-Fifth Annual ACM Symposium on Parallelism in Algorithms and Architectures}, SPAA '13, page 196–203, New York, NY, USA, 2013. Association for Computing Machinery.

\bibitem[MS09]{mendel2009fast}
Manor Mendel and Chaya Schwob.
\newblock Fast ckr partitions of sparse graphs.
\newblock {\em Chicago Journal OF Theoretical Computer Science}, 2:1--18, 2009.

\bibitem[NT12]{naor2012scale}
Assaf Naor and Terence Tao.
\newblock Scale-oblivious metric fragmentation and the nonlinear dvoretzky theorem.
\newblock {\em Israel Journal of Mathematics}, 192:489--504, 2012.

\bibitem[PSM{\etalchar{+}}11]{distance-queries-biological-networks-pavlopoulos2011}
Georgios~A Pavlopoulos, Maria Secrier, Charalampos~N Moschopoulos, Theodoros~G Soldatos, Sophia Kossida, Jan Aerts, Reinhard Schneider, and Pantelis~G Bagos.
\newblock Using graph theory to analyze biological networks.
\newblock {\em BioData mining}, 4:1--27, 2011.

\bibitem[R{\"a}c08]{distance-queries-congestion-minimization-racke2008}
Harald R{\"a}cke.
\newblock Optimal hierarchical decompositions for congestion minimization in networks.
\newblock In {\em Proceedings of the fortieth annual ACM symposium on Theory of computing}, pages 255--264, 2008.

\bibitem[REGH22]{rozhovn2022deterministic}
V{\'a}clav Rozho{\v{n}}, Michael Elkin, Christoph Grunau, and Bernhard Haeupler.
\newblock Deterministic low-diameter decompositions for weighted graphs and distributed and parallel applications.
\newblock In {\em 2022 IEEE 63rd Annual Symposium on Foundations of Computer Science (FOCS)}, pages 1114--1121. IEEE, 2022.

\bibitem[RGH{\etalchar{+}}22]{asssp-improved-rozhoň2022}
V{\'a}clav Rozho{\v{n}}, Christoph Grunau, Bernhard Haeupler, Goran Zuzic, and Jason Li.
\newblock Undirected $(1+ \epsilon)$-shortest paths via minor-aggregates: near-optimal deterministic parallel and distributed algorithms.
\newblock In {\em Proceedings of the 54th Annual ACM SIGACT Symposium on Theory of Computing}, pages 478--487, 2022.

\bibitem[RR98]{lower-bound-embeddings-rabinovich1998}
Yuri Rabinovich and Ran Raz.
\newblock Lower bounds on the distortion of embedding finite metric spaces in graphs.
\newblock {\em Discrete \& Computational Geometry}, 19:79--94, 1998.

\bibitem[She17]{sherman2017generalized}
Jonah Sherman.
\newblock Generalized preconditioning and undirected minimum-cost flow.
\newblock In {\em Proceedings of the Twenty-Eighth Annual ACM-SIAM Symposium on Discrete Algorithms}, pages 772--780. SIAM, 2017.

\bibitem[SST{\etalchar{+}}19]{optimal-transport-biology-gene-expression-schiebinger2019}
Geoffrey Schiebinger, Jian Shu, Marcin Tabaka, Brian Cleary, Vidya Subramanian, Aryeh Solomon, Joshua Gould, Siyan Liu, Stacie Lin, Peter Berube, et~al.
\newblock Optimal-transport analysis of single-cell gene expression identifies developmental trajectories in reprogramming.
\newblock {\em Cell}, 176(4):928--943, 2019.

\bibitem[ST81]{sleator1981data}
Daniel~D Sleator and Robert~Endre Tarjan.
\newblock A data structure for dynamic trees.
\newblock In {\em Proceedings of the thirteenth annual ACM symposium on Theory of computing}, pages 114--122, 1981.

\bibitem[Tho99]{sssp-thorup1999}
Mikkel Thorup.
\newblock Undirected single-source shortest paths with positive integer weights in linear time.
\newblock {\em Journal of the ACM (JACM)}, 46(3):362--394, 1999.

\bibitem[ZGY{\etalchar{+}}22]{zuzic2022universally}
Goran Zuzic, Gramoz Goranci, Mingquan Ye, Bernhard Haeupler, and Xiaorui Sun.
\newblock Universally-optimal distributed shortest paths and transshipment via graph-based $\ell_1$-oblivious routing.
\newblock In {\em Proceedings of the 2022 Annual ACM-SIAM Symposium on Discrete Algorithms (SODA)}, pages 2549--2579. SIAM, 2022.

\bibitem[Zuz23]{zuzic2023simple}
Goran Zuzic.
\newblock A simple boosting framework for transshipment.
\newblock In {\em 31st Annual European Symposium on Algorithms (ESA 2023)}, pages 104--1. Schloss Dagstuhl--Leibniz-Zentrum f{\"u}r Informatik, 2023.

\end{thebibliography}

\appendix
\section{Related Work}
\label{sec:sota}

In this section, we review related work that was not covered in the introduction.

\paragraph{Low-Stretch Spanning Trees (LSSTs).} The notion of probabilistic tree embeddings can be strengthened to require sampled trees $T$ to be subgraphs of $G$, i.e. $T \subseteq G$. The probabilistic tree embeddings are then commonly referred to as \emph{low-stretch spanning trees}. This enables additional applications, most notably, as pre-conditioners to solve Laplacian systems. In fact, the construction in \cite{buy-at-bulk-awerbuch1997} is a low-stretch spanning tree. In \cite{elkin2005lower}, a construction with polylogarithmic stretch was given, which was subsequently improved to $O(\log n \log\log n)$ \cite{abraham2008nearly, abraham2019using}. 

\paragraph{Fast Algorithms for Tree Embeddings.} While \cite{frttrees2003} already gives an $O(n^2)$ time algorithm for constructing FRT trees, this bound was further improved to $O(m \log^3 n)$ in \cite{mendel2009fast} and then to $O(m \log n)$ by \cite{blelloch2017efficient}. \cite{blelloch2012parallel, blelloch2020parallelism, friedrichs2018parallel} consider the construction of FRT trees in the parallel setting where the best constructions achieve $\tilde{O}(1)$ depth, and either $O(n^2)$ or $O(m^{1+\delta})$ for some constant $\delta > 0$ work. The best parallel algorithm with $\tilde{O}(1)$ depth and $\tilde{O}(m)$ work achieves approximation $O(\log^2 n)$ \cite{blurryballgrowing-becker2019}. In the distributed CONGEST model, \cite{khan2008efficient, ghaffari2014near} yield an algorithm to construct FRT trees in $O(n^{0.5+\delta} + D)$ rounds for any constant $\delta > 0$ where $D$ denotes the unweighted diameter of $G$. From \cite{das2011distributed}, a lower bound of $\Omega(n^{0.5} + D)$ is known. 

LSSTs with $O(\log n\log\log n)$ approximation can be computed in time $O(m \log n \log\log n)$ \cite{abraham2019using}. Constructions achieving polylogarithmic approximation running in work $\tilde{O}(m)$ and depth $\tilde{O}(1)$ are known \cite{rozhovn2022deterministic}. 

\paragraph{Fast Algorithms for Tree Embeddings.} In a recent line of work, various algorithms achieved $\tilde{O}(1)$ competitive ratio via algorithms with $\tilde{O}(m)$ runtime/work \cite{li2020faster, andoni2020parallel, asssp-improved-rozhoň2022, fox2024simple}. Notably, all but the last algorithm can be implemented in parallel with $\tilde{O}(1)$ depth. \cite{asssp-improved-rozhoň2022} can also be implemented in the CONGEST model in $\tilde{O}(\sqrt{n} + D)$ rounds. \cite{asssp-improved-rozhoň2022, fox2024simple} are deterministic constructions.

\paragraph{Ramsey Trees via the FRT framework.} In  \cite{mendel2007ramsey}, Mendel and Naor showed that the FRT tree algorithm can be shown to yield for any $k \geq 1$, and $u \in V$ that the sampled tree $T$ preserves all distance $\dist_G(u,v)$ for $v \in V$ up to a factor $O(k)$ with probability $n^{-1/k}$. This was achieved by generalizing the proof from \cite{frttrees2003}. Such trees $T$ are called Ramsey trees and have since received significant further attention. Ramsey trees have deep applications to distance oracles, oblivious routing problems and spanners (see for example \cite{mendel2007ramsey, chechik2014approximate, chechik2015approximate}). The trade-off between approximation and success probability has since been reduced significantly \cite{blurryballgrowing-becker2019, naor2012scale, abraham2020ramsey}. However, as pointed out in \cite{naor2012scale}, it seems unlikely that the proof from \cite{frttrees2003} can be extended to yield a near-perfect trade-off of $(2k-1)$ stretch for a given probability.

\paragraph{Random-Shift Decompositions.} Random-shift, as proposed by Miller et al.\cite{randomshift2013}, was developed in the context of parallelizing spanner constructions. In \cite{randomshift2013}, for an unweighted graph $G$ and any $k \geq 1$, they give a randomized algorithm to compute a subgraph $H \subseteq G$ with work $O(m)$ and depth $O(k)$ w.h.p. such that $H$ has $O(n^{1+1/k})$ edges and preserves all distances up to an $O(k)$ factor. At a loss of $O(\log k)$, their construction also works for weighted graphs. 

Subsequently, Elkin and Neiman \cite{elkin2018efficient} improved the approximation guarantee to $2k-1$, which is optimal under Erdös' girth conjecture. In \cite{forster2022improved}, Forster, Gr{\"o}sbacher, Martin and de Vos show that by replacing the exponential distribution with a capped geometric distribution further yields that the number of rounds $O(k)$ can be made deterministic.

Recently, random-shift techniques have inspired dynamic algorithms for probabilistic tree embeddings \cite{forster2019dynamic, chechik2020dynamic, forster2021dynamic} and shortest-paths problems \cite{bernstein2020near}.

\section{Chernoff Bound for Sum of Exponentially-Distributed Variables}
\label{sec:chernoff}

Given i.i.d. random variables $X_1, X_2, \ldots, X_n \sim \textsc{Exp}(1)$. Let $X = \sum_i X_i$, and $\E[X] = n$, then
\[
\P[X > (2+\delta) n] = \P[e^{tX} > e^{t (2+\delta) n}] \leq \frac{\E[e^{tX}]}{e^{t(2+\delta) \mu}} = e^{-t(2+\delta) \mu} \cdot \prod_{i} \E[e^{tX_i}]. 
\]
for every $t > 0$, using Markov's inequality and then the independence between the random variables.

It remains to use the moment-generating function $M_{X_i}(t) = \E[e^{tX_i}] = \frac{1}{1 - t}$ which holds for any $t < 1$. For $t = 1/2$, $M_{X_i}(1/2) = 2$ and by independence $\E[e^{tX_i}] = 2^n$. We thus obtain
\[
\P[X > (2+\delta) n] < 2^n \cdot e^{-(2+\delta) n/2} < e^{-\delta n/2}.
 \]

\end{document}